\newcommand {\emptycomment}[1]{} 
\newcommand{\nc}{\newcommand}
\newcommand{\delete}[1]{}
\nc{\mfootnote}[1]{\footnote{#1}} 
\nc{\todo}[1]{\tred{To do:} #1}
\nc{\mlabel}[1]{\label{#1}}  
\nc{\mcite}[1]{\cite{#1}}  
\nc{\mref}[1]{\ref{#1}}  
\nc{\meqref}[1]{\eqref{#1}} 
\nc{\mbibitem}[1]{\bibitem{#1}} 
\nc{\mlabel}[1]{\label{#1}  
{\hfill \hspace{1cm}{\bf{{\ }\hfill(#1)}}}}
\nc{\mcite}[1]{\cite{#1}{{\bf{{\ }(#1)}}}}  
\nc{\mref}[1]{\ref{#1}{{\bf{{\ }(#1)}}}}  
\nc{\meqref}[1]{\eqref{#1}{{\bf{{\ }(#1)}}}} 
\nc{\mbibitem}[1]{\bibitem[\bf #1]{#1}} 
\newtheorem{thm}{Theorem}[section]
\newtheorem{lem}[thm]{Lemma}
\newtheorem{cor}[thm]{Corollary}
\newtheorem{pro}[thm]{Proposition}
\theoremstyle{definition}
\newtheorem{defi}[thm]{Definition}
\newtheorem{ex}[thm]{Example}
\newtheorem{rmk}[thm]{Remark}
\newtheorem{defil}[thm]{Definition-Lemma}
\nc{\tred}[1]{\textcolor{red}{#1}}
\nc{\tblue}[1]{\textcolor{blue}{#1}}
\nc{\tgreen}[1]{\textcolor{green}{#1}}
\nc{\tpurple}[1]{\textcolor{purple}{#1}}
\nc{\btred}[1]{\textcolor{red}{\bf #1}}
\nc{\btblue}[1]{\textcolor{blue}{\bf #1}}
\nc{\btgreen}[1]{\textcolor{green}{\bf #1}}
\nc{\btpurple}[1]{\textcolor{purple}{\bf #1}}
\nc{\si}[1]{\textcolor{purple}{Siyuan:#1}}
\nc{\cm}[1]{\textcolor{red}{Chengming:#1}}
\nc{\hs}[1]{\textcolor{red}{#1}}
\nc{\lir}[1]{\textcolor{blue}{Li:#1}}
\nc{\twovec}[2]{\left(\begin{array}{c} #1 \\ #2\end{array} \right )}
\nc{\threevec}[3]{\left(\begin{array}{c} #1 \\ #2 \\ #3 \end{array}\right )}
\nc{\twomatrix}[4]{\left(\begin{array}{cc} #1 & #2\\ #3 & #4 \end{array} \right)}
\nc{\threematrix}[9]{{\left(\begin{matrix} #1 & #2 & #3\\ #4 & #5 & #6 \\ #7 & #8 & #9 \end{matrix} \right)}}
\nc{\twodet}[4]{\left|\begin{array}{cc} #1 & #2\\ #3 & #4 \end{array} \right|}
\nc{\rk}{\mathrm{r}}
\newcommand{\A}{A_h}
\nc{\tforall}{\text{ for all }}
\nc{\svec}[2]{{\tiny\left(\begin{matrix}#1\\
#2\end{matrix}\right)\,}}  
\nc{\ssvec}[2]{{\tiny\left(\begin{matrix}#1\\
#2\end{matrix}\right)\,}} 
\nc{\typeI}{local cocycle $3$-Lie bialgebra\xspace}
\nc{\typeIs}{local cocycle $3$-Lie bialgebras\xspace}
\nc{\typeII}{double construction $3$-Lie bialgebra\xspace}
\nc{\typeIIs}{double construction $3$-Lie bialgebras\xspace}
\nc{\bia}{{$\mathcal{P}$-bimodule ${\bf k}$-algebra}\xspace}
\nc{\bias}{{$\mathcal{P}$-bimodule ${\bf k}$-algebras}\xspace}
\nc{\rmi}{{\mathrm{I}}}
\nc{\rmii}{{\mathrm{II}}}
\nc{\rmiii}{{\mathrm{III}}}
\nc{\pr}{{\mathrm{pr}}}
\nc{\OT}{constant $\theta$-}
\nc{\T}{$\theta$-}
\nc{\IT}{inverse $\theta$-}
\nc{\asi}{ASI\xspace}
\nc{\qadm}{$Q$-admissible\xspace}
\nc{\aybe}{AYBE\xspace}
\nc{\admset}{\{\pm x\}\cup (-x+K^{\times}) \cup K^{\times} x^{-1}}
\nc{\dualrep}{gives a dual representation\xspace}
\nc{\admt}{admissible to\xspace}
\nc{\opa}{\cdot_A}
\nc{\opb}{\cdot_B}
\nc{\post}{positive type\xspace}
\nc{\negt}{negative type\xspace}
\nc{\invt}{inverse type\xspace}
\nc{\pll}{\beta}
\nc{\plc}{\epsilon}
\nc{\ass}{{\mathit{Ass}}}
\nc{\lie}{{\mathit{Lie}}}
\nc{\comm}{{\mathit{Comm}}}
\nc{\dend}{{\mathit{Dend}}}
\nc{\zinb}{{\mathit{Zinb}}}
\nc{\tdend}{{\mathit{TDend}}}
\nc{\prelie}{{\mathit{preLie}}}
\nc{\postlie}{{\mathit{PostLie}}}
\nc{\quado}{{\mathit{Quad}}}
\nc{\octo}{{\mathit{Octo}}}
\nc{\ldend}{{\mathit{ldend}}}
\nc{\lquad}{{\mathit{LQuad}}}
 \nc{\adec}{\check{;}} \nc{\aop}{\alpha}
\nc{\dftimes}{\widetilde{\otimes}} \nc{\dfl}{\succ} \nc{\dfr}{\prec}
\nc{\dfc}{\circ} \nc{\dfb}{\bullet} \nc{\dft}{\star}
\nc{\dfcf}{{\mathbf k}} \nc{\apr}{\ast} \nc{\spr}{\cdot}
\nc{\twopr}{\circ} \nc{\tspr}{\star} \nc{\sempr}{\ast}
\nc{\disp}[1]{\displaystyle{#1}}
\nc{\bin}[2]{ (_{\stackrel{\scs{#1}}{\scs{#2}}})}  
\nc{\binc}[2]{ \left (\!\! \begin{array}{c} \scs{#1}\\
    \scs{#2} \end{array}\!\! \right )}  
\nc{\bincc}[2]{  \left ( {\scs{#1} \atop
    \vspace{-.5cm}\scs{#2}} \right )}  
\nc{\sarray}[2]{\begin{array}{c}#1 \vspace{.1cm}\\ \hline
    \vspace{-.35cm} \\ #2 \end{array}}
\nc{\bs}{\bar{S}} \nc{\dcup}{\stackrel{\bullet}{\cup}}
\nc{\dbigcup}{\stackrel{\bullet}{\bigcup}} \nc{\etree}{\big |}
\nc{\la}{\longrightarrow} \nc{\fe}{\'{e}} \nc{\rar}{\rightarrow}
\nc{\dar}{\downarrow} \nc{\dap}[1]{\downarrow
\rlap{$\scriptstyle{#1}$}} \nc{\uap}[1]{\uparrow
\rlap{$\scriptstyle{#1}$}} \nc{\defeq}{\stackrel{\rm def}{=}}
\nc{\dis}[1]{\displaystyle{#1}} \nc{\dotcup}{\,
\displaystyle{\bigcup^\bullet}\ } \nc{\sdotcup}{\tiny{
\displaystyle{\bigcup^\bullet}\ }} \nc{\hcm}{\ \hat{,}\ }
\nc{\hcirc}{\hat{\circ}} \nc{\hts}{\hat{\shpr}}
\nc{\lts}{\stackrel{\leftarrow}{\shpr}}
\nc{\rts}{\stackrel{\rightarrow}{\shpr}} \nc{\lleft}{[}
\nc{\lright}{]} \nc{\uni}[1]{\tilde{#1}} \nc{\wor}[1]{\check{#1}}
\nc{\free}[1]{\bar{#1}} \nc{\den}[1]{\check{#1}} \nc{\lrpa}{\wr}
\nc{\curlyl}{\left \{ \begin{array}{c} {} \\ {} \end{array}
    \right .  \!\!\!\!\!\!\!}
\nc{\curlyr}{ \!\!\!\!\!\!\!
    \left . \begin{array}{c} {} \\ {} \end{array}
    \right \} }
\nc{\leaf}{\ell}       
\nc{\longmid}{\left | \begin{array}{c} {} \\ {} \end{array}
    \right . \!\!\!\!\!\!\!}
\nc{\ot}{\otimes} \nc{\sot}{{\scriptstyle{\ot}}}
\nc{\otm}{\overline{\ot}}
\nc{\ora}[1]{\stackrel{#1}{\rar}}
\nc{\ola}[1]{\stackrel{#1}{\la}}
\nc{\pltree}{\calt^\pl}
\nc{\epltree}{\calt^{\pl,\NC}}
\nc{\rbpltree}{\calt^r}
\nc{\scs}[1]{\scriptstyle{#1}} \nc{\mrm}[1]{{\rm #1}}
\nc{\dirlim}{\displaystyle{\lim_{\longrightarrow}}\,}
\nc{\invlim}{\displaystyle{\lim_{\longleftarrow}}\,}
\nc{\mvp}{\vspace{0.5cm}} \nc{\svp}{\vspace{2cm}}
\nc{\vp}{\vspace{8cm}} \nc{\proofbegin}{\noindent{\bf Proof: }}
\nc{\proofend}{$\blacksquare$ \vspace{0.5cm}}
\nc{\freerbpl}{{F^{\mathrm RBPL}}}
\nc{\sha}{{\mbox{\cyr X}}}  
\nc{\ncsha}{{\mbox{\cyr X}^{\mathrm NC}}} \nc{\ncshao}{{\mbox{\cyr
X}^{\mathrm NC,\,0}}}
\nc{\shpr}{\diamond}    
\nc{\shprm}{\overline{\diamond}}    
\nc{\shpro}{\diamond^0} 
\nc{\shprr}{\diamond^r}  
\nc{\shpra}{\overline{\diamond}^r}
\nc{\shpru}{\check{\diamond}} \nc{\catpr}{\diamond_l}
\nc{\rcatpr}{\diamond_r} \nc{\lapr}{\diamond_a}
\nc{\sqcupm}{\ot}
\nc{\lepr}{\diamond_e} \nc{\vep}{\varepsilon} \nc{\labs}{\mid\!}
\nc{\rabs}{\!\mid} \nc{\hsha}{\widehat{\sha}}
\nc{\lsha}{\stackrel{\leftarrow}{\sha}}
\nc{\rsha}{\stackrel{\rightarrow}{\sha}} \nc{\lc}{\lfloor}
\nc{\rc}{\rfloor}
\nc{\tpr}{\sqcup}
\nc{\nctpr}{\vee}
\nc{\plpr}{\star}
\nc{\rbplpr}{\bar{\plpr}}
\nc{\sqmon}[1]{\langle #1\rangle}
\nc{\forest}{\calf}
\nc{\altx}{\Lambda_X} \nc{\vecT}{\vec{T}} \nc{\onetree}{\bullet}
\nc{\Ao}{\check{A}}
\nc{\seta}{\underline{\Ao}}
\nc{\deltaa}{\overline{\delta}}
\nc{\trho}{\tilde{\rho}}
\nc{\rpr}{\circ}
\nc{\dpr}{{\tiny\diamond}}
\nc{\rprpm}{{\rpr}}
\nc{\mmbox}[1]{\mbox{\ #1\ }} \nc{\ann}{\mrm{ann}}
\nc{\Aut}{\mrm{Aut}} \nc{\can}{\mrm{can}}
\nc{\twoalg}{{two-sided algebra}\xspace}
\nc{\colim}{\mrm{colim}}
\nc{\Cont}{\mrm{Cont}} \nc{\rchar}{\mrm{char}}
\nc{\cok}{\mrm{coker}} \nc{\dtf}{{R-{\rm tf}}} \nc{\dtor}{{R-{\rm
tor}}}
\nc{\depth}{{\mrm d}}
\nc{\Div}{{\mrm Div}} \nc{\End}{\mrm{End}} \nc{\Ext}{\mrm{Ext}}
\nc{\Fil}{\mrm{Fil}} \nc{\Frob}{\mrm{Frob}} \nc{\Gal}{\mrm{Gal}}
\nc{\GL}{\mrm{GL}} \nc{\Hom}{\mrm{Hom}} \nc{\hsr}{\mrm{H}}
\nc{\hpol}{\mrm{HP}} \nc{\id}{\mrm{id}} \nc{\im}{\mrm{im}}
\nc{\incl}{\mrm{incl}} \nc{\length}{\mrm{length}}
\nc{\LR}{\mrm{LR}} \nc{\mchar}{\rm char} \nc{\NC}{\mrm{NC}}
\nc{\mpart}{\mrm{part}} \nc{\pl}{\mrm{PL}}
\nc{\ql}{{\QQ_\ell}} \nc{\qp}{{\QQ_p}}
\nc{\rank}{\mrm{rank}} \nc{\rba}{\rm{RBA }} \nc{\rbas}{\rm{RBAs }}
\nc{\rbpl}{\mrm{RBPL}}
\nc{\rbw}{\rm{RBW }} \nc{\rbws}{\rm{RBWs }} \nc{\rcot}{\mrm{cot}}
\nc{\rest}{\rm{controlled}\xspace}
\nc{\rdef}{\mrm{def}} \nc{\rdiv}{{\rm div}} \nc{\rtf}{{\rm tf}}
\nc{\rtor}{{\rm tor}} \nc{\res}{\mrm{res}} \nc{\SL}{\mrm{SL}}
\nc{\Spec}{\mrm{Spec}} \nc{\tor}{\mrm{tor}} \nc{\Tr}{\mrm{Tr}}
\nc{\mtr}{\mrm{sk}}
\nc{\ab}{\mathbf{Ab}} \nc{\Alg}{\mathbf{Alg}}
\nc{\Algo}{\mathbf{Alg}^0} \nc{\Bax}{\mathbf{Bax}}
\nc{\Baxo}{\mathbf{Bax}^0} \nc{\RB}{\mathbf{RB}}
\nc{\RBo}{\mathbf{RB}^0} \nc{\BRB}{\mathbf{RB}}
\nc{\Dend}{\mathbf{DD}} \nc{\bfk}{{\bf k}} \nc{\bfone}{{\bf 1}}
\nc{\base}[1]{{a_{#1}}} \nc{\detail}{\marginpar{\bf More detail}
    \noindent{\bf Need more detail!}
    \svp}
\nc{\Diff}{\mathbf{Diff}} \nc{\gap}{\marginpar{\bf
Incomplete}\noindent{\bf Incomplete!!}
    \svp}
\nc{\FMod}{\mathbf{FMod}} \nc{\mset}{\mathbf{MSet}}
\nc{\rb}{\mathrm{RB}} \nc{\Int}{\mathbf{Int}}
\nc{\Mon}{\mathbf{Mon}}
\nc{\remarks}{\noindent{\bf Remarks: }}
\nc{\OS}{\mathbf{OS}} 
\nc{\Rep}{\mathbf{Rep}}
\nc{\Rings}{\mathbf{Rings}} \nc{\Sets}{\mathbf{Sets}}
\nc{\DT}{\mathbf{DT}}
\nc{\BA}{{\mathbb A}} \nc{\CC}{{\mathbb C}} \nc{\DD}{{\mathbb D}}
\nc{\EE}{{\mathbb E}} \nc{\FF}{{\mathbb F}} \nc{\GG}{{\mathbb G}}
\nc{\HH}{{\mathbb H}} \nc{\LL}{{\mathbb L}} \nc{\NN}{{\mathbb N}}
\nc{\QQ}{{\mathbb Q}} \nc{\RR}{{\mathbb R}} \nc{\BS}{{\mathbb{S}}} \nc{\TT}{{\mathbb T}}
\nc{\VV}{{\mathbb V}} \nc{\ZZ}{{\mathbb Z}}
\nc{\calao}{{\mathcal A}} \nc{\cala}{{\mathcal A}}
\nc{\calc}{{\mathcal C}} \nc{\cald}{{\mathcal D}}
\nc{\cale}{{\mathcal E}} \nc{\calf}{{\mathcal F}}
\nc{\calfr}{{{\mathcal F}^{\,r}}} \nc{\calfo}{{\mathcal F}^0}
\nc{\calfro}{{\mathcal F}^{\,r,0}} \nc{\oF}{\overline{F}}
\nc{\calg}{{\mathcal G}} \nc{\calh}{{\mathcal H}}
\nc{\cali}{{\mathcal I}} \nc{\calj}{{\mathcal J}}
\nc{\call}{{\mathcal L}} \nc{\calm}{{\mathcal M}}
\nc{\caln}{{\mathcal N}} \nc{\calo}{{\mathcal O}}
\nc{\calp}{{\mathcal P}} \nc{\calq}{{\mathcal Q}} \nc{\calr}{{\mathcal R}}
\nc{\calt}{{\mathcal T}} \nc{\caltr}{{\mathcal T}^{\,r}}
\nc{\calu}{{\mathcal U}} \nc{\calv}{{\mathcal V}}
\nc{\calw}{{\mathcal W}} \nc{\calx}{{\mathcal X}}
\nc{\CA}{\mathcal{A}}
\nc{\fraka}{{\mathfrak a}} \nc{\frakB}{{\mathfrak B}}
\nc{\frakb}{{\mathfrak b}} \nc{\frakd}{{\mathfrak d}}
\nc{\oD}{\overline{D}}
\nc{\frakF}{{\mathfrak F}} \nc{\frakg}{{\mathfrak g}}
\nc{\frakm}{{\mathfrak m}} \nc{\frakM}{{\mathfrak M}}
\nc{\frakMo}{{\mathfrak M}^0} \nc{\frakp}{{\mathfrak p}}
\nc{\frakS}{{\mathfrak S}} \nc{\frakSo}{{\mathfrak S}^0}
\nc{\fraks}{{\mathfrak s}} \nc{\os}{\overline{\fraks}}
\nc{\frakT}{{\mathfrak T}}
\nc{\oT}{\overline{T}}
\nc{\frakX}{{\mathfrak X}} \nc{\frakXo}{{\mathfrak X}^0}
\nc{\frakx}{{\mathbf x}}
\nc{\frakTx}{\frakT}      
\nc{\frakTa}{\frakT^a}    
\nc{\frakTxo}{\frakTx^0}   
\nc{\caltao}{\calt^{a,0}}   
\nc{\ox}{\overline{\frakx}} \nc{\fraky}{{\mathfrak y}}
\nc{\frakz}{{\mathfrak z}} \nc{\oX}{\overline{X}}
\font\cyr=wncyr10
\nc{\al}{\alpha}
\nc{\lam}{\lambda}
\nc{\lr}{\longrightarrow}
\nc{\lyc}[1]{\textcolor{blue}{Lin Yuanchang: #1}}
\begin{document}

\title{Quantizations of Transposed Poisson algebras by Novikov deformations}



\author{Siyuan Chen}
\address{Chern Institute of Mathematics \& LPMC, Nankai University, Tianjin 300071, China}
\email{1120210010@mail.nankai.edu.cn}

\author{Chengming Bai$^{*}$}
\thanks{*Author to whom any correspondence should be addressed.}
\address{Chern Institute of Mathematics \& LPMC, Nankai University, Tianjin 300071, China}
\email{baicm@nankai.edu.cn}

\subjclass[2020]{13D10, 13N15, 17A30, 17B63, 53D55}

\keywords{transposed Poisson algebra; Novikov algebra; deformation; quantization.}


\begin{abstract}
 The notions of the Novikov deformation of a commutative associative algebra and the corresponding classical limit are introduced. We show such a classical limit belongs to a subclass of transposed Poisson
 algebras, and hence the Novikov deformation is defined to be the quantization of the corresponding
  transposed Poisson algebra. As a direct
  consequence, we revisit the relationship between transposed
  Poisson algebras and Novikov-Poisson algebras due to the fact that there is a natural Novikov
deformation of the commutative associative algebra in a
Novikov-Poisson algebra. Hence all transposed Poisson algebras of
Novikov-Poisson type, including unital transposed Poisson
algebras, can be quantized. Finally, we classify the
quantizations of $2$-dimensional complex transposed Poisson
algebras in which the Lie brackets are non-abelian up to
equivalence.

\end{abstract}

\maketitle

\tableofcontents

\section{Introduction}
The notion of a transposed Poisson algebra was introduced in
\cite{BBGW-TPP} by exchanging the roles of the two binary
operations in the Leibniz rule defining a Poisson algebra.
Transposed Poisson algebras are related to various algebraic
structures. In \cite{BBGW-TPP}, the relations of transposed
Poisson algebras with Novikov-Poisson algebras and $3$-Lie
algebras were given. In \cite{Sar-cgd}, the results that the
variety of commutative Gel'fand-Dorfman algebras coincides with
the variety of transposed Poisson algebras as well as transposed
Poisson algebras are $F$-manifold algebras were obtained. In
\cite{FKL}, a relation between $\frac{1}{2}$-derivations of Lie
algebras and transposed Poisson algebras was established, and
hence  along this approach, several interesting examples of
transposed Poisson algebras were constructed (\cite{K} and the
references therein).

It is well-known (\cite{MR-pol}) that Poisson algebras are
classical limits of associative deformations of commutative
associative algebras, and hence the associative deformations of
commutative associative algebras are defined to be the
quantizations of the corresponding Poisson algebras, that is,
Poisson algebras are quantized by associative deformations of
commutative associative algebras. It has been shown in
\cite{Do-fm,LSB-dpL} that $F$-manifold algebras are quantized by
pre-Lie deformations of commutative associative algebras. So it is
natural to consider whether transposed Poisson algebras can be
quantized by certain deformations of commutative associative
algebras. The present paper studies this question and gives a
partial answer by introducing the notion of Novikov deformations
of commutative associative algebras.

In fact, Novikov algebras were introduced in connection with the Hamiltonian operators in the formal variational calculus (\cite{GD}) and Poisson brackets of hydrodynamic type (\cite{BN-Nov}). The deformation theory of Novikov algebras has
been established in \cite{BM-DN} to study the ``realization"
theory of Novikov algebras. Note that a Novikov algebra is commutative if and only if it is a commutative associative algebra. Hence it makes sense to consider the deformations of commutative associative algebras as Novikov algebras, that is, Novikov deformations of commutative associative algebras. Similarly, it is also natural to introduce and then consider the
associated quantizations, which have never been studied before.

The main conclusion in this paper is that the classical limit of
the Novikov deformation of a commutative associative algebra
belongs to a subclass of transposed Poisson algebras. Therefore
the notion of a quantization of a transposed Poisson algebra is
introduced by Novikov deformations, that is, the Novikov
deformation of a commutative associative algebra is defined to be
the quantization of the corresponding transposed Poisson algebra
as the classical limit. Hence the transposed Poisson algebras that
can be quantized in this sense should be limited to the
aforementioned subclass of transposed Poisson algebras.
Unfortunately, the explicit characterization of this subclass in
terms of the (additional) identities has not been fixed yet, which
is still an open question.

Nevertheless, there are natural quantizations for a typical type
of transposed Poisson algebras, the so-called Novikov-Poisson
type, which are the ``commutators" of Novikov-Poisson algebras in
the sense that both the transposed Poisson algebras and the
Novikov-Poisson algebras have the same commutative associative
algebras and the Lie brackets in the former are the commutators of
the Novikov algebras in the latter. Note that transposed Poisson algebras in which the commutative associative
algebras are unital are of Novikov-Poisson type (\cite{BBGW-TPP}).
Recall that the notion of Novikov-Poisson algebras was
 introduced to construct a
tensor product theory of Novikov algebras as well as relate to
infinite-dimensional Lie algebras (\cite{Xu-NP}). We observe that
there is a natural Novikov deformation of the commutative
associative algebra in a Novikov-Poisson algebra. Hence the
relationship between Novikov-Poisson algebras and transposed
Poisson algebras given in \cite{BBGW-TPP} is revisited as an
immediate consequence by considering the corresponding classical
limits of such Novikov deformations. As a byproduct, all
transposed Poisson algebras of Novikov-Poisson type can be quantized.

Finally, we also classify quantizations of $2$-dimensional complex
transposed Poisson algebras where the Lie brackets are
non-abelian up to equivalence as a guide for a further
interpretation and development of the quantization theory of
transposed Poisson algebras.

This paper is organized as follows. In Section~\ref{s2}, we introduce the notion of the Novikov deformation of a commutative associative algebra and show that the classical limit belongs to a subclass of transposed Poisson algebras. In Section~\ref{s3}, we
give the natural Novikov deformation of the commutative
associative algebra in every Novikov-Poisson algebra. As an application, the existence of quantizations of transposed Poisson algebras of Novikov-Poisson type, including unital transposed Poisson algebras, is established. In Section~\ref{s4}, the
quantizations of $2$-dimensional complex transposed Poisson algebras with non-abelian Lie brackets are classified up to equivalence.

\smallskip

\noindent {\bf Notations.} Throughout this paper, we fix the base
field  $\mathbb{K}$ of characteristic $0$ for vector spaces. We
denote the power series ring $\mathbb{K}[[h]]$ by
$\mathbb{K}_{h}$. We
 use $\mathbb{N}$ and $\mathbb{Z}^{+}$ to denote the sets of natural numbers and positive integers, respectively.
 Associative algebras in this paper need not be unital.
\section{Novikov deformations and the corresponding classical limits}\label{s2}\

We give the notions of Novikov deformations and classical limits
of the Novikov deformations of commutative associative algebras.
We show that such classical limits belong to a subclass of transposed Poisson algebras.

\begin{defil}\label{eqvnov}
Let $A$ be a vector space and $\circ$ be a $\mathbb{K}$-bilinear operation on
$A$. Then the following two conditions are equivalent.
\begin{enumerate}
    \item\label{defi1}For all $x,y,z\in A$,
    \begin{eqnarray}
 (x\circ y)\circ z-(y\circ x)\circ z&=&x\circ(y\circ z)-y\circ(x\circ z),\label{no1} \\
  (x\circ y)\circ z&=&(x\circ z)\circ y\label{no2}.
\end{eqnarray}
\item\label{defi2}For all $x,y,z\in A$, {\rm (\ref{no2})} and the following equation hold.
    \begin{equation}\label{nctpa}
   2[x,y]\circ z=[x\circ z,y]+[x,y \circ z],
\end{equation}
where the operation $[,]$ on $A$ is defined by
\begin{equation}\label{eq:Lie}
[x,y]:=x\circ y-y\circ x,\;\;\forall x,y\in A.
\end{equation}
\end{enumerate}
The pair $(A,\circ)$ is called a \textbf{Novikov algebra} if it satisfies either of the two conditions above.
\end{defil}
\begin{proof}
By (\ref{no2}), we have
    \begin{equation}\label{rgd}
       (x\circ y)\circ z-(y\circ x)\circ z= (x\circ z)\circ y-(y\circ z)\circ x,\;\forall x,y,z\in A.
    \end{equation}

 ${\rm (\ref{defi1})}\Longrightarrow {\rm (\ref{defi2})}$.
We obtain (\ref{nctpa}) by adding
  (\ref{no1}) and (\ref{rgd}) together.

${\rm (\ref{defi2})}\Longrightarrow {\rm (\ref{defi1})}$.
 We obtain (\ref{no1}) by subtracting (\ref{rgd}) from (\ref{nctpa}).
\end{proof}

\begin{rmk}\label{rmk:ring}
In fact, the above conclusion and notion are still available if
the field $\mathbb K$ is replaced by a commutative ring $R$, that
is, $A$ is a module over a commutative ring $R$ and $\circ$ is a
$R$-bilinear operation on $A$. Accordingly, in this case,
$(A,\circ)$ is called a \textbf{Novikov algebra over $R$}.

\end{rmk}

The following two results are known (cf. \cite{Bai}).
\begin{lem}
A Novikov algebra is commutative if and only if it is a commutative associative algebra.
\end{lem}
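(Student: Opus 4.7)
The proof will split into the two implications of the biconditional, and neither requires identity~(\ref{no1}).

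For the ``if'' direction, I would assume $(A,\circ)$ is a commutative associative algebra and verify the two Novikov axioms directly. Commutativity of $\circ$ immediately makes each side of (\ref{no1}) equal to zero, so that identity is trivial. For (\ref{no2}), associativity combined with commutativity gives
\[
(x\circ y)\circ z \;=\; x\circ y\circ z \;=\; (x\circ z)\circ y,
\]
which is exactly (\ref{no2}).

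For the ``only if'' direction, I would assume $(A,\circ)$ is a Novikov algebra satisfying $x\circ y = y\circ x$ for all $x,y\in A$, and aim to derive associativity. The plan is a short three-step chain: starting from $x\circ(y\circ z)$, apply commutativity to pull $x$ to the outside as $(y\circ z)\circ x$; then invoke the right-commutativity identity (\ref{no2}) with variables relabelled (swapping the roles of $z$ and $x$) to rewrite this as $(y\circ x)\circ z$; and finally apply commutativity once more to reach $(x\circ y)\circ z$. Thus $x\circ(y\circ z) = (x\circ y)\circ z$, establishing associativity.

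There is no real obstacle here: the entire content of the lemma follows from (\ref{no2}) alone together with commutativity, while (\ref{no1}) plays no role in either direction. The only thing to watch is the correct relabelling of variables when invoking (\ref{no2}) in the middle step of the second implication.
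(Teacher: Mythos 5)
Your proof is correct. Note that the paper itself gives no argument for this lemma: it is stated as a known result with a citation to \cite{Bai}, so there is no in-paper proof to compare against, and your direct verification is the standard one. The nontrivial direction is handled exactly right: the chain $x\circ(y\circ z)=(y\circ z)\circ x=(y\circ x)\circ z=(x\circ y)\circ z$ uses a correct instance of \eqref{no2} (with the roles of the three variables taken as $y,z,x$), and it shows that commutativity together with right-commutativity \eqref{no2} already forces associativity, so that \eqref{no1} indeed plays no role in that implication. One small wording caveat in the ``if'' direction: commutativity alone does not make the right-hand side of \eqref{no1} vanish --- identifying $x\circ(y\circ z)$ with $y\circ(x\circ z)$ also uses associativity. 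Since associativity is part of your hypothesis there, this is a phrasing slip rather than a gap, and the verification goes through as you describe.
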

\begin{defil}
\label{NL} Let $(A,\circ)$ be a Novikov algebra. Then the
commutator of $(A,\circ)$ defined by \rm{(\ref{eq:Lie})} makes
$(A,[,])$ a Lie algebra, which is called {\bf the sub-adjacent Lie
algebra of $(A,\circ)$}.
\end{defil}

\begin{rmk}\label{rmk:ring2}
As in Remark~\ref{rmk:ring}, the above conclusion still holds when
the field $\mathbb K$ is replaced by a commutative ring $R$, that
is, the commutator of a Novikov algebra over a commutative ring
$R$ defines a Lie algebra over $R$.
\end{rmk}

The following typical example of a Novikov algebra was given by S.Gel'fand.
\begin{ex}{\rm{(\cite{GD})}}\label{ex-dn}
Let $(A,\cdot)$ be a commutative associative algebra and $D$ be a derivation. Set
\begin{equation}\label{eq-dn}
x\circ y=x\cdot D(y),\quad \forall x,y\in A.
 \end{equation}
Then $(A,\circ)$ is a Novikov algebra.
\end{ex}
\begin{defi}{\rm{(\cite{BBGW-TPP})}}
    A \textbf{transposed Poisson algebra} is a triple $(A,\cdot,[,])$ consisting of a commutative associative algebra $(A,\cdot)$ and a Lie algebra $(A,[,])$ that satisfy the following
    compatible condition
    \begin{equation}\label{ttp}
2[x,y]\cdot z=[x,y\cdot z]+[x\cdot z,y],\quad\forall x,y,z\in A.
    \end{equation}
\end{defi}
Let $A$ be a vector space. Consider the set of formal series defined by
\[A_h := \Big\{\sum_{i=0}^{\infty}x_{i}~h^{i}~\Big|~x_{i}\in A\Big\}.\]
For
$x_{h}=\sum\limits_{i=0}^{\infty}x_{i}~h^{i},y_{h}=\sum\limits_{i=0}^{\infty}y_{i}~h^{i}\in
A_h$ and
$\lambda_{h}=\sum\limits_{i=0}^{\infty}\lambda_{i}~h^{i}\in\mathbb{K}_{h}$,
set
\[x_{h}+y_{h}:=\sum_{i=0}^{\infty}(x_{i}+y_{i})h^{i} , \quad \lambda_{h} x_{h}:=\sum_{i=0}^{\infty}\Big(\sum_{m+n=i}\lambda_{m}~x_{n}\Big)~h^{i}.\]
Then $A_h$ becomes a $\mathbb{K}_{h}$-module. Actually, $A_h$ is a
topologically free $\mathbb{K}_{h}$-module with respect to the
$h$-adic topology (see \cite{ES} for more details on
topologically free $\mathbb{K}_{h}$-modules). Moreover, if $A$ is
a vector space in dimension $n\in\mathbb{Z}^{+}$, then $A_h$ is a
finitely generated free $\mathbb{K}_{h}$-module of rank $n$.
\begin{defi}
  Let $(A,\circ)$ be a Novikov algebra. If $A_h$ is endowed with a $\mathbb{K}_{h}$-bilinear operation $\circ_{h}$
  such that $(A_h,\circ_{h})$ is a Novikov algebra over $\mathbb{K}_{h}$ and
  \begin{equation}
    x\circ_{h}y\equiv x\circ y\pmod h\label{modh}
  \end{equation}
  for all $x,y\in A$, then we call $(A_h,\circ_{h})$ a \textbf{Novikov deformation} of $(A,\circ)$. Two Novikov deformations $(A_h,\circ_{h})$
  and $(A_{h},\circ_{h}^{\prime})$ of $(A,\circ)$ are \textbf{equivalent} if there is a $\mathbb{K}_{h}$-linear
  isomorphism $f_{h}\in\mathrm{End}_{\mathbb{K}_{h}}(A_h)$ such that
  \begin{eqnarray*}
    f_{h}(x)&\equiv& x\pmod h,\\
    f_{h}(x\circ_{h}y)&=&f_{h}(x)\circ_{h}^{\prime}f_{h}(y),\forall x,y\in A.
  \end{eqnarray*}
\end{defi}

\begin{defi} Let $(A,\cdot)$ be a commutative associative algebra.
Suppose that $(\A,\cdot_h )$ is a Novikov deformation of
$(A,\cdot)$. Let $[,]$ be the $\mathbb K$-bilinear operation on
$A$ such that
  \begin{equation}
    [x,y]\equiv \frac{x\cdot_{h}y-y\cdot_{h}x}{h}\pmod h.\label{lim}
  \end{equation}
  Then the triple $(A,\cdot,[,])$ is called the \textbf{classical limit} of the Novikov deformation $(A_{h},\cdot_{h})$, and conversely, $(A_h,\cdot_{h})$ is called a \textbf{quantization} of $(A,\cdot,[,])$.
\end{defi}
\begin{thm}\label{thmnt}
  Let $(A,\cdot)$ be a commutative associative algebra and $(\A,\cdot_{h})$ be a Novikov deformation. Then the corresponding classical limit $(A,\cdot,[,])$ is a transposed Poisson algebra.
\end{thm}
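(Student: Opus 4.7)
The plan is to work over $\mathbb{K}_h$ using the Novikov identities satisfied by $\cdot_h$, and then reduce modulo $h$. By Remarks~\ref{rmk:ring} and~\ref{rmk:ring2}, $(\A,\cdot_h)$ is a Novikov algebra over $\mathbb{K}_h$, so Definition-Lemma~\ref{eqvnov} applied over $\mathbb{K}_h$ tells me that its commutator $C_h(x,y):=x\cdot_h y-y\cdot_h x$ makes $(\A,C_h)$ a Lie algebra over $\mathbb{K}_h$ and that the identity~\eqref{nctpa} holds over $\mathbb{K}_h$, namely
\[
2\,C_h(x,y)\cdot_h z = C_h(x\cdot_h z,y)+C_h(x,y\cdot_h z), \quad x,y,z\in\A.
\]
My task is to transport these structures down to $A$ through the reduction modulo $h$.

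The first step is to observe that because $\cdot_h\equiv\cdot\pmod{h}$ and $\cdot$ is commutative, one has $C_h(x,y)\in h\A$ for all $x,y\in\A$. Since $\A$ is topologically free over $\mathbb{K}_h$, it is in particular $h$-torsion-free, so I can define a $\mathbb{K}_h$-bilinear bracket $\{,\}_h$ on $\A$ as the unique operation satisfying $h\{x,y\}_h=C_h(x,y)$; by~\eqref{lim}, for $x,y\in A\subset\A$ the reduction of $\{x,y\}_h$ modulo $h$ equals $[x,y]$. Cancelling a factor of $h$ from the antisymmetry of $C_h$ and two factors of $h$ from its Jacobi identity (both permitted by torsion-freeness) shows that $\{,\}_h$ is itself a $\mathbb{K}_h$-Lie bracket on $\A$. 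Reducing modulo $h$, which is a $\mathbb{K}$-algebra homomorphism $\A\to A$ sending $\cdot_h\mapsto\cdot$ and $\{,\}_h\mapsto[,]$, yields that $(A,[,])$ is a Lie algebra.

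For the transposed Poisson compatibility, I substitute $C_h=h\{,\}_h$ into the displayed identity above to obtain
\[
2h\{x,y\}_h\cdot_h z = h\{x\cdot_h z,y\}_h+h\{x,y\cdot_h z\}_h,
\]
cancel one factor of $h$, and then reduce modulo $h$ to obtain $2[x,y]\cdot z=[x\cdot z,y]+[x,y\cdot z]$, which is exactly~\eqref{ttp}. The only delicate point in the argument is justifying the divisions by $h$ and $h^2$; once the topological freeness of $\A$ over $\mathbb{K}_h$ has been recorded, the rest is a purely formal transfer of polynomial identities across the reduction map $\A\to A$, so I do not anticipate any substantive obstacle beyond this bookkeeping.
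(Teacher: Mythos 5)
Your proof is correct and follows essentially the same route as the paper's: both invoke Definition-Lemma~\ref{eqvnov} (identity~\eqref{nctpa}) and Definition-Lemma~\ref{NL} over $\mathbb{K}_h$ via Remarks~\ref{rmk:ring} and~\ref{rmk:ring2}, and then pass to $A$ using topological freeness. The only cosmetic difference is that you divide the commutator by $h$ to form a genuine $\mathbb{K}_h$-Lie bracket before reducing modulo $h$, whereas the paper keeps the unnormalized bracket $\{x,y\}_h=x\cdot_h y-y\cdot_h x$ and extracts the coefficients of $h^2$ (for the Jacobi identity) and of $h$ (for~\eqref{ttp}) directly --- an identical computation in different packaging.
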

\begin{proof}
Set
\begin{equation}\label{brah}
\{x_h,y_h\}_{h}:=x_h\cdot_{h}y_h-y_h\cdot_{h}x_h,\; \forall x_h,y_h\in A_h.
\end{equation}
Then
\begin{equation}
\{x,y\}_{h}\equiv [x,y]~h\pmod{h^2},\;\forall x,y\in A.
\end{equation}
Since $A_h$ is a topologically free $\mathbb{K}_h$-module and
$(A_h,\cdot_h)$ is a Novikov algebra over $\mathbb{K}_h$, by
Definition-Lemma~\ref{NL} and Remark~\ref{rmk:ring2}, we get
\begin{equation}
\{\{x,y\}_{h},z\}_{h}+\{\{y,z\}_{h},x\}_{h}+\{\{z,x\}_{h},y\}_{h}=0,\;\forall
x,y,z\in A.\label{ja}\end{equation} Taking the $h^{2}$ terms in
(\ref{ja}), we show that $(A,[,])$ is a Lie algebra. \delete{ Let
$x,y,z\in A$. By definition, we have
\begin{eqnarray}
(x\cdot_{h}y)\cdot_{h}z-(y\cdot_{h}x)\cdot_{h}z&=&x\cdot_{h}(y\cdot_{h}z)-y\cdot_{h}(x\cdot_{h}z),\label{n1} \\
(x\cdot_{h}y)\cdot_{h}z&=&(x\cdot_{h}z)\cdot_{h}y.\label{n2}
\end{eqnarray}
By (\ref{n2}), we have
\begin{equation}\label{guodu1}
     (x\cdot_{h}y)\cdot_{h}z-(y\cdot_{h}x)\cdot_{h}z=(x\cdot_{h}z)\cdot_{h}y-(y\cdot_{h}z)\cdot_{h}x.
\end{equation}
Adding (\ref{n1}) and (\ref{guodu1}) together, we get
\begin{equation}\label{guodu2}
    2((x\cdot_{h}y)\cdot_{h}z-(y\cdot_{h}x)\cdot_{h}z)=(x\cdot_{h}(y\cdot_{h}z)-(y\cdot_{h}z)\cdot_{h}x)+((x\cdot_{h}z)\cdot_{h}y-y\cdot_{h}(x\cdot_{h}z)),
\end{equation}
that is,}

Due to the fact that $A_h$ is a topologically free
$\mathbb{K}_h$-module and $(A_h,\cdot_h)$ is a Novikov algebra
over $\mathbb{K}_h$ again, by Definition-Lemma~\ref{eqvnov} and
Remark~\ref{rmk:ring}, we have
\begin{equation}\label{guodu3}
2\{x,y\}_{h}\cdot_h z=\{x,y\cdot_{h}z\}_{h}+\{x\cdot_{h}z,y\}_{h},\forall x,y,z\in A.
\end{equation}
Taking the first order terms of $h$ in (\ref{guodu3}), we obtain
(\ref{ttp}). Thus the conclusion follows.
\end{proof}

\begin{defi}
 Let $(A,\cdot,[,])$ be a transposed Poisson algebra.
    If there is a Novikov deformation $(A_h,\circ_h)$
    of $(A,\cdot)$ whose classical limit coincides with
    $(A,\cdot,[,])$, then we say that $(A,\cdot,[,])$ can be
    \textbf{quantized}. Otherwise we say that $(A,\cdot,[,])$
    \textbf{cannot be quantized}.
\end{defi}

    Let $\mathsf{P}$ be an operad (see \cite{Do-op} and \cite{LV-op} for more details on algebraic operads). For $n\in\mathbb{N}$, denote the vector space consisting of elements of arity $n$ in the operad $\mathsf{P}$ by $\mathsf{P}(n)$. Denote the operads of Novikov algebras and transposed Poisson algebras by $\mathsf{Nov}$ and $\mathsf{TPois}$ respectively.
Let $\mathsf{I}$ be the operadic ideal of $\mathsf{Nov}$ generated by its commutator \[[\mathsf{x_1},\mathsf{x_2}]=\mathsf{x_1}\circ\mathsf{x_2}-\mathsf{x_2}\circ\mathsf{x_1},\]
where $\mathsf{x_1},\mathsf{x_2}$ are placeholders.
Denote the associated graded operad of $\mathsf{Nov}$ with respect to the filtration by powers of the ideal $\mathsf{I}$ by $\mathsf{grNov}$, that is,
$$\mathsf{grNov}:=\mathsf{Nov}/F_\mathsf{I}\mathsf{Nov}\bigoplus_{k\ge 1}F_\mathsf{I}^k \mathsf{Nov}/F_\mathsf{I}^{k+1} \mathsf{Nov},$$
where the object $F_\mathsf{I}^k \mathsf{Nov}$ is spanned by operad compositions where elements of
$\mathsf{I}$ are used at least $k$ times (see \cite{Do-fm} and \cite{Do-ap} for more details on filtration on an operad by powers of an ideal).
We have
$$\mathrm{dim}\;\mathsf{grNov}(n)=\mathrm{dim}\;\mathsf{Nov}(n),\;\forall n\in\mathbb{N}.$$

\begin{pro}\label{proop}
Up to arity 4, the space of relations in $\mathsf{grNov}$ is
generated by the defining relations of $\mathsf{TPois}$.
\end{pro}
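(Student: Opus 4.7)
The plan is to construct an operad morphism $\varphi\colon \mathsf{TPois}\to\mathsf{grNov}$, show that it is surjective in every arity, and then match dimensions in arities $\leq 4$ to conclude that it is an isomorphism there; this is equivalent to the proposition.

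To build $\varphi$, observe that in $\mathsf{grNov}$ the binary operation $\circ$ descends to an operation $\cdot$ on $\mathsf{Nov}(2)/F_\mathsf{I}^1\mathsf{Nov}(2)$ which is automatically commutative since $x\circ y-y\circ x\in\mathsf{I}$, while the commutator $[x,y]:=x\circ y-y\circ x$ defines a class in $F_\mathsf{I}^1\mathsf{Nov}(2)/F_\mathsf{I}^2\mathsf{Nov}(2)$. I would send the two binary generators of $\mathsf{TPois}$ to these classes and verify the defining relations: associativity of $\cdot$ follows from the right-commutativity (\ref{no2}) modulo $\mathsf{I}$ combined with the already-established commutativity; the Jacobi identity for $[,]$ holds already in $\mathsf{Nov}$ by Definition-Lemma~\ref{NL}; and the transposed Poisson compatibility (\ref{ttp}) is precisely the image of (\ref{nctpa}) in $F_\mathsf{I}^1/F_\mathsf{I}^2$, which is the operadic shadow of Theorem~\ref{thmnt}. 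This makes $\varphi$ a well-defined operad morphism.

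Surjectivity of $\varphi$ in every arity is a simple induction. The identity $x\circ y=\tfrac12(x\cdot y+y\cdot x)+\tfrac12[x,y]$ expresses the Novikov generator in terms of the image of $\varphi$, and an induction on the filtration degree rewrites any composition of $\circ$'s in $\mathsf{grNov}(n)$ as an iterated composition of $\cdot$ and $[,]$.

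The substantive content is injectivity in arities up to $4$. Using the given identity $\dim\mathsf{grNov}(n)=\dim\mathsf{Nov}(n)$, surjectivity already gives $\dim\mathsf{TPois}(n)\geq\dim\mathsf{Nov}(n)$, so it suffices to prove the opposite inequality for $n\leq 4$. For $n=2,3$ this is a short hand computation: one lists the spanning tree monomials built from $\cdot$ and $[,]$, reduces them modulo the transposed Poisson axioms, and compares with the known dimensions of $\mathsf{Nov}(2)$ and $\mathsf{Nov}(3)$. For $n=4$ the enumeration is considerably larger; the cleanest approach is to fix a compatible monomial order on planar trees, carry out an operadic normal-form reduction using associativity, Jacobi, and (\ref{ttp}) as rewriting rules, and check that the number of surviving monomials matches $\dim\mathsf{Nov}(4)$. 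The main obstacle is precisely this arity-$4$ step: the transposed Poisson relations are not evidently a quadratic Gr\"obner basis, so the rewriting must be done with some care to rule out hidden consequences, and this is the point at which the restriction to $n\leq 4$ genuinely enters the argument.
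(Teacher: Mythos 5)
Your skeleton coincides with the paper's proof: polarization in the sense of \cite{MR-pol} identifies $\mathsf{grNov}$ with the operad encoding the classical limits, so by Theorem~\ref{thmnt} the relations of $\mathsf{grNov}$ contain those of $\mathsf{TPois}$, giving a surjection $\mathsf{TPois}\twoheadrightarrow\mathsf{grNov}$; since $\dim\mathsf{grNov}(n)=\dim\mathsf{Nov}(n)$, everything reduces to the dimension equality $\dim\mathsf{TPois}(n)=\dim\mathsf{Nov}(n)$ for $n\le 4$. The genuine gap is that you never actually establish the upper bound $\dim\mathsf{TPois}(n)\le\dim\mathsf{Nov}(n)$: for $n=4$ you only outline a normal-form reduction and explicitly flag it as the main obstacle, so the one substantive quantitative input of the whole argument is deferred rather than proved. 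The paper closes exactly this step by citation instead of computation: $\dim\mathsf{Nov}(n)=\binom{2n-2}{n-1}$ by \cite[Theorem~1.1]{D-Novdim}, giving $1,2,6,20,70$ up to arity $5$, while $\dim\mathsf{TPois}(n)=1,2,6,20,74$ up to arity $5$ by \cite{Sar-cgd}; the coincidence $20=20$ in arity $4$, against the strict inequality $74>70$ in arity $5$, is precisely where the hypothesis $n\le 4$ enters and why the statement cannot be pushed further (cf.\ the extra identity (\ref{s5})).

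One structural remark on your arity-$4$ plan: the worry about confluence and quadratic Gr\"obner bases is unnecessary given your own surjectivity step. For the upper bound you only need a \emph{spanning} set of $20$ monomials in $\mathsf{TPois}(4)$ --- every tree monomial must rewrite into the chosen set, but confluence is not required --- and ``hidden consequences'' of the relations could only push $\dim\mathsf{TPois}(4)$ below $20$, which is already excluded by the lower bound $\dim\mathsf{TPois}(4)\ge\dim\mathsf{grNov}(4)=\dim\mathsf{Nov}(4)=20$ furnished by the surjection. So the strategy is sound and would succeed, but as written the proof is incomplete at its decisive step, which must either be carried out in full (a sizable but finite linear-algebra computation) or replaced, as in the paper, by the references above.
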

\begin{proof}
By the polarization method (\cite{MR-pol}),  $\mathsf{grNov}$ is
isomorphic to the operad encoding the classical limits. Thus, by
Theorem~\ref{thmnt}, the space of relations in $\mathsf{grNov}$
contains the relations generated by the defining relations of the
operad $\mathsf{TPois}$. By \cite[Theorem~1.1]{D-Novdim},
$\mathrm{dim}\;\mathsf{Nov}(n)=\binom{2n-2}{n-1}$ where
$n\in\mathbb{N}$. Up to arity $5$, the dimensions of
$\mathsf{TPois}(n)$ are $1$, $2$, $6$, $20$, $74$  by
\cite{Sar-cgd}. Note that for $n\leq 4$,
$$\mathrm{dim}\;\mathsf{grNov}(n)=\mathrm{dim}\;\mathsf{Nov}(n)=\mathrm{dim}\;\mathsf{TPois}(n).$$
Then the conclusion is obtained.
\end{proof}
 Thus, by Proposition~{\ref{proop}}, the classical limit $(A,\cdot,[,])$ satisfies no identities independent to the defining relations of the transposed Poisson algebra up to arity $4$. But $\mathrm{dim}\;\mathsf{Nov}(5)=70< \mathrm{dim}\;\mathsf{TPois}(5)$, and then there are more identities in $(A,\cdot,[,])$.
For example, by \cite{D-Novsp}, we obtain
    \begin{equation}\label{s5}
    \sum_{\sigma\in S_4}(-1)^{\sigma}[x_{\sigma(1)},[x_{\sigma(2)},[x_{\sigma(3)},[x_{\sigma(4)},x_{5}]]]]=0,\;\forall x_{i}\in A,\;1\leq i\leq 5.
    \end{equation}
 So the classical limit $(A,\cdot,[,])$ belongs to a subclass of transposed Poisson algebras and the operad encoding this subclass is isomorphic to the operad $\mathsf{grNov}$.
 It is still an open problem to give the exactly additional identities to characterize this subclass.
 Nevertheless, we introduce the notion as follows. If a transposed Poisson algebra belongs to this subclass, then it is called \textbf{a transposed Poisson algebra of quantizable type}.
 Obviously, if a transposed Poisson algebra is not of quantizable type, then it cannot be quantized.

We first study quantizations of the ``degenerate" transposed
Poisson algebras, that is, transposed Poisson algebras in which
the Lie algebras are abelian or the associative algebras are
trivial in the sense that the products are zero.  A transposed
Poisson algebra in which the Lie algebra is abelian can be quantized. For example, for a transposed Poisson algebra
$(A,\cdot,[,])$ in which $(A,[,])$ is abelian, the null
deformation of the commutative associative algebra $(A,\cdot)$ in
the sense that the operation $\cdot$ is extended
$\mathbb{K}_{h}$-bilinearly to $A_h$, giving a quantization of
$(A,\cdot,[,])$. On the other hand, note that for any Lie algebra
$(A,[,])$, $(A,\cdot, [,])$ is always a transposed Poisson algebra
in which the commutative associative algebra $(A,\cdot)$ is
trivial. Therefore, for the quantization of a transposed Poisson
algebra in which the commutative associative algebra is trivial,
we have the following conclusions.
\begin{pro}\label{pro-ql} Let $(A,[,])$ be a Lie algebra and $\cdot$ be a trivial product on $A$.
\begin{enumerate}
\item\label{it:aa} If $(A, [,])$ is the sub-adjacent Lie algebra
of a Novikov algebra, then the transposed Poisson algebra
$(A,\cdot, [,])$ can be quantized, which is of quantizable type
automatically.

\item \label{it:bb} If $(A, [,])$ is not the sub-adjacent Lie
algebra of a Novikov algebra, then $(A,\cdot, [,])$ cannot be
quantized. In particular, if $(A,[,])$ is a finite-dimensional
semisimple Lie algebra, then $(A,\cdot, [,])$ cannot be quantized.
\end{enumerate}
\end{pro}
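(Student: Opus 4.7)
The plan is that both parts hinge on the hypothesis that the commutative associative product $\cdot$ is trivial, which forces any Novikov deformation $\cdot_h$ to satisfy $x\cdot_h y\equiv 0\pmod h$ for $x,y\in A$, so that $\cdot_h$ is divisible by $h$ on $A$. This observation is exactly what links quantizations of a transposed Poisson algebra with trivial $\cdot$ to genuine Novikov structures on $A$ whose sub-adjacent Lie algebra equals $(A,[,])$.

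For (a), given a Novikov algebra $(A,\circ)$ whose sub-adjacent Lie algebra is $(A,[,])$, I would define $x\cdot_h y:=h\,(x\circ y)$ for $x,y\in A$ and extend $\mathbb{K}_h$-bilinearly to $A_h$. Both Novikov axioms \meqref{no1} and \meqref{no2} for $\cdot_h$ are then $h^2$-scaled versions of the corresponding axioms for $\circ$, so they transfer immediately; the congruence \meqref{modh} is automatic since $x\cdot_h y\equiv 0\equiv x\cdot y\pmod h$; and \meqref{lim} returns precisely $x\circ y-y\circ x=[x,y]$. This produces the required quantization. The assertion that $(A,\cdot,[,])$ is then automatically of quantizable type is just the contrapositive of the observation immediately preceding the proposition, namely that a transposed Poisson algebra not of quantizable type cannot be quantized.

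For (b), I would argue contrapositively: assume $(A,\cdot,[,])$ admits a quantization $(A_h,\cdot_h)$ and deduce that $(A,[,])$ must arise as the sub-adjacent Lie algebra of some Novikov structure on $A$. Write
\[
x\cdot_h y \;=\; \sum_{i\geq 1}\mu_i(x,y)\,h^i,\qquad x,y\in A,
\]
for $\mathbb{K}$-bilinear operations $\mu_i\colon A\times A\to A$; the sum starts at $i=1$ because $x\cdot_h y\equiv 0\pmod h$. Extracting the $h^2$-coefficient of the two Novikov axioms for $\cdot_h$ applied on $x,y,z\in A$ (legitimate by Remark~\mref{rmk:ring}) yields precisely \meqref{no1} and \meqref{no2} for $\mu_1$, so $(A,\mu_1)$ is a Novikov algebra, and \meqref{lim} identifies its sub-adjacent Lie algebra with $(A,[,])$, contradicting the hypothesis. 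For the semisimple subcase I would invoke the known result that no finite-dimensional semisimple Lie algebra over a field of characteristic zero can be the sub-adjacent Lie algebra of a Novikov algebra, a consequence of the classical fact (ultimately from Zelmanov's classification of finite-dimensional simple Novikov algebras) that the sub-adjacent Lie algebra of every finite-dimensional Novikov algebra over $\mathbb{K}$ is solvable.

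The main subtlety lies not in the deformation bookkeeping but in confirming that only $\mu_1$ is controlled at the lowest non-trivial order: the higher coefficients $\mu_i$ with $i\geq 2$ are free parameters whose contributions to the Novikov axioms live in strictly higher powers of $h$, so the order-$h^2$ relations involve only $\mu_1$. Once this is observed the implication in (b) is forced, and the genuinely external input is only the algebraic non-existence of Novikov structures on finite-dimensional semisimple Lie algebras over $\mathbb{K}$.
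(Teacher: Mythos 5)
Your proposal is correct and follows essentially the same route as the paper: in (a) you build the deformation $x\cdot_h y=h\,(x\circ y)$ exactly as in the paper's proof, and in (b) your extraction of the first-order coefficient $\mu_1$ and the order-$h^2$ terms of the Novikov axioms is precisely the paper's argument with $\circ=\mu_1$, concluding with the same external fact (cited in the paper to Burde) that sub-adjacent Lie algebras of finite-dimensional Novikov algebras in characteristic zero are never semisimple. Your explicit remark that the higher coefficients $\mu_i$, $i\geq 2$, only contribute at order $h^3$ and beyond is a nice clarification the paper leaves implicit, but it is not a different method.
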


\begin{proof}
(\ref{it:aa}). Let $(A,\circ)$ be a Novikov algebra whose
sub-adjacent Lie algebra is $(A, [,])$. Set
\begin{equation}\label{eq-dnp}
x\cdot_{h}y=x\cdot y+x\circ y~h,\quad \forall x,y\in A.
\end{equation}
Since $\cdot$ is a trivial product on $A$, we have
\begin{equation}\label{eq-dnp2}
x\cdot_{h}y=x\circ y~h,\quad \forall x,y\in A.
\end{equation}
Extend $\mathbb{K}_{h}$-bilinearly the above operation to $A_h$.
Note that
\begin{eqnarray*}
    (x\cdot_{h}y)\cdot_{h}z&=&((x\circ y)\circ z)~h^2,\\
    x\cdot_{h}(y\cdot_{h}z)&=&(x\circ (y\circ z))~h^2,\;\forall x,y,z\in A.
\end{eqnarray*}
Thus by (\ref{no1}) and (\ref{no2}), for all $x,y,z\in A$, we obtain
\begin{eqnarray}
(x\cdot_{h}y)\cdot_{h}z-(y\cdot_{h}x)\cdot_{h}z&=&x\cdot_{h}(y\cdot_{h}z)-y\cdot_{h}(x\cdot_{h}z)\; \textrm{and}\label{n1}\\
(x\cdot_{h}y)\cdot_{h}z&=&(x\cdot_{h}z)\cdot_{h}y,\label{n2}
\end{eqnarray}
 respectively.
Then $(\A,\cdot_{h})$ is a Novikov deformation of $(A,\cdot)$.
Since for all $x,y\in A$,
\begin{eqnarray*}
    \frac{x\cdot_{h}y-y\cdot_{h}x}{h}&=&x\circ y-y\circ x
    =[x,y],
\end{eqnarray*}
we conclude that $(\A,\cdot_{h})$ is a quantization of $(A,\cdot,[,])$.

(\ref{it:bb}). Suppose that $(A_h,\cdot_h)$ is a quantization of
the transposed Poisson algebra $(A,\cdot,[,])$. Let $\circ$ be a
$\mathbb K$-bilinear operation on $A$ that satisfies
    \[x\cdot_{h}y\equiv x\cdot y+ x\circ y~h\pmod{h^2},\;\forall x,y\in A.\]
    Since $\cdot$ is a trivial product on $A$, we obtain
    \[x\cdot_{h}y\equiv x\circ y~h\pmod{h^2},\;\forall x,y\in A.\]
    Then take the $h^{2}$ terms in (\ref{n1}) and (\ref{n2}), we show that $(A,\circ)$ is a Novikov algebra. By (\ref{lim}), we have
    \[x\circ y-y\circ x=[x,y],\;\forall x,y\in A,\]
which is a contradiction. The last conclusion follows from the
known fact (\cite{Bu-prielie}) that the sub-adjacent Lie algebra
of a finite-dimensional Novikov algebra over a field of
characteristic zero cannot be semisimple.
\end{proof}

As an application of the above conclusions, we illustrate that
there is a transposed Poisson algebra of quantizable type that
cannot be quantized.
\begin{ex}
There exists a transposed Poisson algebra $(A,\cdot,[,])$ of
quantizable type in which $(A,\cdot)$ is trivial and $(A,[,])$ is
not a commutator of a Novikov algebra, giving an example of a
transposed Poisson algebra of quantizable type  that cannot be
quantized. Let $(\mathbb{K}[t],\diamond)$ be the polynomial
algebra in one variable $t$. Set
\begin{equation*}
x\circ y=x\diamond \frac{dy}{dt},\quad \forall x,y\in \mathbb{K}[t].
\end{equation*}
Then $(\mathbb{K}[t],\circ)$ is a Novikov algebra by Example~\ref{ex-dn}.
Let $(\mathbb{K}[t],[,])$ be the sub-adjacent Lie algebra of $(\mathbb{K}[t],\circ)$
and $\cdot$ be a trivial product on $\mathbb{K}[t]$. Thus
$(\mathbb{K}[t],\cdot,[,])$ is a transposed Poisson algebra of quantizable
type by Proposition~\ref{pro-ql}~(\ref{it:aa}). Let $A$ be the $3$-dimensional subspace of $\mathbb{K}[t]$ with the basis $\{1,2t,-t^{2}\}$. Note that
\[[2t,-t^{2}]=-2t^{2},\;[2t,1]=-2,\;[-t^{2},1]=2t.\]
Hence $(A,\cdot,[,])$ is a subalgebra of the transposed Poisson
algebra $(\mathbb{K}[t],\cdot,[,])$ and $(A,[,])$ is isomorphic to
the simple Lie algebra $\mathfrak{sl}(2,\mathbb{K})$. So
$(A,\cdot,[,])$ cannot be quantized by
Proposition~\ref{pro-ql}~(\ref{it:bb}). On the other hand, since
any subalgebra of a transposed Poisson algebra of quantizable type
satisfies the relations of the operad encoding transposed Poisson
algebras of quantizable type, a subalgebra of a transposed Poisson
algebra of quantizable type is still of quantizable type.
Therefore, $(A,\cdot,[,])$ is a transposed Poisson algebra of
quantizable type, but it cannot be quantized.
\end{ex}

\section{Quantizations of transposed Poisson algebras of Novikov-Poisson type}\label{s3}

There is a natural Novikov deformation of the commutative
associative algebra in a Novikov-Poisson algebra. Hence the
relationship between Novikov-Poisson algebras and transposed
Poisson algebras given in \cite{BBGW-TPP} is revisited as an
immediate consequence. On the other hand, all transposed Poisson
algebras of Novikov-Poisson type, including unital transposed
Poisson algebras, can be quantized.

\delete{Let $(A,\circ)$ be a Novikov algebra
and $\circ_{h}$ a bilinear operation on $A[[h]]$. We exploit the
following series expansion of $\circ_{h}$
\begin{eqnarray}\label{expa}
  x\circ_{h}y=\sum_{n=0}^{\infty} x\circ_{n} y~h^{n}
\end{eqnarray}
where $x,y, x\circ_{n} y\in A$. It is clear that $(A[[h]],\circ_{h})$ is a Novikov deformation of $(A,\circ)$ if and only if
the following conditions are satisfied
\begin{enumerate}
  \item $x\circ_{0}y=x\circ y$,\label{1}
  \item $\underset{i+j=n}{\sum} ((x\circ_{i}y)\circ_{j}z-(y\circ_{i}x)\circ_{j}z)=\underset{i+j=n}{\sum} (x\circ_{i}(y\circ_{j}z)-y\circ_{i}(x\circ_{j}z))$,\label{2}
  \item $\underset{i+j=n}{\sum}(x\circ_{i}y)\circ_{j}z=\underset{i+j=n}{\sum}(x\circ_{i}z)\circ_{j}y$.\label{3}
\end{enumerate}
for all $x,y,z\in A, n\in \mathbb{N}$.}

\begin{defi}{\rm{(\cite{Xu-NP})}}
    A \textbf{Novikov-Poisson algebra} is a triple $(A,\cdot,\circ)$ where
    $(A,\cdot)$ is a commutative associative algebra and $(A,\circ)$ is a
    Novikov algebra such that
    \begin{eqnarray}
        (x\cdot y)\circ z &=& x\cdot(y\circ z),\label{np1}\\
        (x\circ y)\cdot z-(y\circ x)\cdot z &=& x\circ (y\cdot z)-y\circ (x\cdot z),\quad\forall x,y,z\in A\label{np2}.
    \end{eqnarray}
\end{defi}

\begin{ex}\label{ex-np} {\rm{(\cite{Xu-NP})}}
    Let $(A,\cdot)$ be a commutative associative algebra and $D$ be a derivation. Define a $\mathbb K$-bilinear operation $\circ$ on $A$ by (\ref{eq-dn}). Then $(A,\cdot,\circ)$ is a Novikov-Poisson algebra.
\end{ex}

\begin{pro}\label{pronp}
    Let $(A,\cdot,\circ)$ be a Novikov-Poisson algebra. Set
    \begin{equation}\label{eq-dnp}
    x\cdot_{h}y=x\cdot y+x\circ y~h,\quad \forall x,y\in A.
    \end{equation}
    Extend $\mathbb{K}_{h}$-bilinearly the above operation to $A_h$.  Then  $(\A,\cdot_{h})$ is a Novikov deformation of $(A,\cdot)$.
    Moreover, in its classical limit $(A,\cdot,[,])$,
    \begin{equation}\label{eq-conp}
    [x,y]=x\circ y-y\circ x,\quad \forall x,y\in A.
    \end{equation}
\end{pro}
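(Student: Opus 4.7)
The plan is to verify two things: first, that $(\A,\cdot_{h})$ is a Novikov algebra over $\mathbb{K}_h$; and second, that the classical limit bracket is given by $[x,y] = x\circ y - y\circ x$. Since $\cdot_h$ is defined to be $\mathbb{K}_h$-bilinear with $x\cdot_h y\equiv x\cdot y\pmod h$ by construction, together these two facts will immediately imply that $(\A,\cdot_h)$ is a Novikov deformation of $(A,\cdot)$ whose classical limit realizes the claimed bracket.

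For the first part, I plan to expand each Novikov identity \eqref{no1} and \eqref{no2} after substituting $x\cdot_h y = x\cdot y + (x\circ y)\,h$. By $\mathbb{K}_h$-bilinearity it suffices to verify these identities for $x,y,z\in A$, and both sides then become polynomials in $h$ of degree at most two with coefficients in $A$. The strategy is to compare coefficients power by power. The $h^0$ coefficients reduce to identities in $(A,\cdot)$ which hold by commutative associativity; the $h^2$ coefficients reduce to the defining Novikov identities for $(A,\circ)$, which hold by assumption; and the $h^1$ coefficients are exactly where the Novikov--Poisson compatibility axioms \eqref{np1} and \eqref{np2} are needed.

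The technical heart of the argument lies at order $h^1$, and I expect it to be the only nontrivial obstacle. For the left-symmetry identity \eqref{no1}, after using commutativity of $\cdot$ to cancel $(x\cdot y)\circ z$ against $(y\cdot x)\circ z$ on the left, and applying \eqref{np1} to rewrite $x\cdot(y\circ z)$ as $(x\cdot y)\circ z$ (and symmetrically) on the right, the remaining $h^1$ equality collapses to precisely the compatibility identity \eqref{np2}. For the right-symmetry identity \eqref{no2}, applying \eqref{np1} to the $h^1$ terms $(x\cdot y)\circ z$ and $(x\cdot z)\circ y$ reduces the required equality to $(y\circ x)\cdot z = (z\circ x)\cdot y$; this in turn follows from commutativity of $\cdot$ combined with the rewriting $(a\circ b)\cdot c = (a\cdot c)\circ b$, which is a direct consequence of \eqref{np1} and the commutativity of $\cdot$. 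The only real difficulty is the bookkeeping needed to expose the right rewritings so that the $h^1$ identities visibly become consequences of \eqref{np1} and \eqref{np2}; the arithmetic itself is routine.

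For the classical limit, the calculation is immediate:
\[
\frac{x\cdot_h y - y\cdot_h x}{h} \;=\; \frac{(x\cdot y - y\cdot x) + (x\circ y - y\circ x)\,h}{h} \;=\; x\circ y - y\circ x,
\]
since $\cdot$ is commutative. Reducing modulo $h$ yields the claimed bracket $[x,y]=x\circ y-y\circ x$, completing the identification of the classical limit.
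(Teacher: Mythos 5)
Your proposal is correct and follows essentially the same route as the paper: both verify the two Novikov identities for $\cdot_h$ by direct expansion, with the $h^0$ terms handled by commutative associativity, the $h^2$ terms by the Novikov axioms \eqref{no1}--\eqref{no2} for $\circ$, the $h^1$ terms by the compatibility axioms \eqref{np1}--\eqref{np2} (including your derived rewriting $(a\circ b)\cdot c=(a\cdot c)\circ b$, which the paper uses implicitly via \eqref{np1} and commutativity), and the classical limit by the same one-line computation. The only cosmetic difference is that you organize the verification coefficientwise in powers of $h$, while the paper manipulates the full expressions in $A_h$ at once.
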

\begin{proof}
  Let $x,y,z\in A$. Then we have
    \begin{eqnarray}
    (x\cdot_{h}y)\cdot_{h}z&=&(x\cdot y+x\circ y~h)\cdot_{h}z\nonumber\\
    &=& x\cdot y\cdot z+(x\circ y)\cdot z~h+(x\cdot y)\circ z~h+(x\circ y)\circ z~h^{2}\label{npd1}.
    \end{eqnarray}
    Exchanging $y$ and $z$ in (\ref{npd1}), we deduce
        \begin{eqnarray*}
    (x\cdot_{h}z)\cdot_{h}y&=&x\cdot z\cdot y+(x\circ z)\cdot y~h+(x\cdot z)\circ y~h+(x\circ z)\circ y~h^{2}\\
    &=&x\cdot y\cdot z+y\cdot(x\circ z)~h+(z\cdot x)\circ y~h+(x\circ z)\circ y~h^{2}\\
    &\overset{(\ref{np1})}{=}&x\cdot y\cdot z+(y\cdot x)\circ z~h+z\cdot(x\circ y)~h+(x\circ z)\circ y~h^{2}\\
    &\overset{(\ref{no2})}{=}&x\cdot y\cdot z+(y\cdot x)\circ z~h+z\cdot(x\circ y)~h+(x\circ y)\circ z~h^{2}\\
    &=& (x\cdot_{h}y)\cdot_{h}z.
    \end{eqnarray*}
Moreover, we have
    \begin{eqnarray}
    x\cdot_{h}(y\cdot_{h}z)&=&x\cdot_{h}(y\cdot z+y\circ z~h)\nonumber\\
    &=& x\cdot y\cdot z+x\cdot (y\circ z)~h+x\circ (y\cdot z)~h+ x\circ (y\circ z)~h^{2}
    \label{npd2}.
    \end{eqnarray}
    Comparing the difference between $(\ref{npd1})$ and $(\ref{npd2})$, we obtain
    \begin{equation}
(x\cdot_{h}y)\cdot_{h}z-x\cdot_{h}(y\cdot_{h}z)=((x\circ y)\cdot z-x\circ (y\cdot z))~h+((x\circ y)\circ z-x\circ (y\circ z))~h^{2}
\label{2ls}.
    \end{equation}
    Exchanging $x$ and $y$ in (\ref{2ls}), we have
    \begin{eqnarray*}
    (y\cdot_{h}x)\cdot_{h}z-y\cdot_{h}(x\cdot_{h}z)&=&((y\circ x)\cdot z-y\circ (x\cdot z))~h+((y\circ x)\circ z-y\circ (x\circ z))~h^{2}\\
    &\overset{(\ref{np2})}{=}&((x\circ y)\cdot z-x\circ (y\cdot z))~h+((y\circ x)\circ z-y\circ (x\circ z))~h^{2}\\
    &\overset{(\ref{no1})}{=}&((x\circ y)\cdot z-x\circ (y\cdot z))~h+((x\circ y)\circ z-x\circ (y\circ z))~h^{2}\\
    &=& (x\cdot_{h}y)\cdot_{h}z-x\cdot_{h}(y\cdot_{h}z).
\end{eqnarray*}
    Thus $(\A,\cdot_{h})$ is a Novikov algebra over $\mathbb{K}_h$.
    Then $(\A,\cdot_{h})$ is a Novikov deformation of $(A,\cdot)$ because
    $$ x\cdot_{h}y=x\cdot y+x\circ y~h\equiv x\cdot y\pmod h ,\;\forall x,y\in A.$$
    Also note that
    $$x\cdot_{h}y-y\cdot_{h}x=(x\circ y-y\circ x)~h,\quad \forall x,y\in A.$$
    Hence the conclusion holds.
\end{proof}

\begin{rmk}
For a Novikov-Poisson algebra $(A,\cdot,\circ)$, there is a
Novikov deformation of the Novikov algebra $(A,\circ)$ (not the
commutative associative algebra $(A,\cdot)$) given in
\cite{ZBM-NP}. Explicitly, set
 \begin{equation*}
 x\circ_{h}y=x\circ y+x\cdot y~h,\quad\forall x,y\in A.
 \end{equation*}
 Extend $\mathbb{K}_{h}$-bilinearly the above operation to $A_h$
 and then $(\A,\circ_{h})$ is a Novikov deformation of $(A,\circ)$.
\end{rmk}

In \cite{BBGW-TPP}, the following result is proved by a direct
proof, which is revisited as follows.
\begin{cor}{\rm(\cite{BBGW-TPP})}
    Let $(A,\cdot,\circ)$ be a Novikov-Poisson algebra. Define a $\mathbb K$-bilinear operation $[,]$ on $A$ by $(\ref{eq-conp})$. Then $(A,\cdot,[,])$ is a transposed Poisson algebra.
\end{cor}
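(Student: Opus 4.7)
The proof will be a very short combination of the two main results already established in the excerpt, so I will present it as essentially a one-line argument with a clear pointer to the relevant machinery.

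The plan is to invoke Proposition~\ref{pronp} and Theorem~\ref{thmnt} in succession. First, given the Novikov-Poisson algebra $(A,\cdot,\circ)$, Proposition~\ref{pronp} tells us that the $\mathbb{K}_h$-bilinear extension of $x\cdot_{h}y = x\cdot y + x\circ y~h$ defines a Novikov deformation $(A_h,\cdot_h)$ of the commutative associative algebra $(A,\cdot)$. Second, the same proposition identifies the Lie bracket in the classical limit as $[x,y] = x\circ y - y\circ x$, which is exactly the operation defined by \eqref{eq-conp}. Hence $(A,\cdot,[,])$ is the classical limit of the Novikov deformation $(A_h,\cdot_h)$.

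Then I would apply Theorem~\ref{thmnt}, which asserts that the classical limit of any Novikov deformation of a commutative associative algebra is automatically a transposed Poisson algebra. This yields at once that $(A,\cdot,[,])$ is a transposed Poisson algebra, which is the desired conclusion.

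There is no real obstacle to overcome here: the substantive content, namely checking the Novikov identities for $\cdot_h$ and extracting the transposed Poisson compatibility \eqref{ttp} from the Novikov relations modulo $h^2$, has already been carried out in the proofs of Proposition~\ref{pronp} and Theorem~\ref{thmnt} respectively. The only thing worth emphasizing in the writeup is the matching of the Lie bracket in the classical limit (via \eqref{lim}) with the commutator in \eqref{eq-conp}, which is immediate from the computation $x\cdot_h y - y\cdot_h x = (x\circ y - y\circ x)h$ at the end of the proof of Proposition~\ref{pronp}.
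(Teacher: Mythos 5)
Your proposal is correct and follows exactly the paper's own argument: Proposition~\ref{pronp} identifies $(A,\cdot,[,])$ as the classical limit of the Novikov deformation $(A_h,\cdot_h)$ defined by \eqref{eq-dnp}, and Theorem~\ref{thmnt} then gives the transposed Poisson structure. Your extra remark matching \eqref{lim} with \eqref{eq-conp} via $x\cdot_h y - y\cdot_h x = (x\circ y - y\circ x)h$ is just a slightly more explicit writeup of the same step.
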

\begin{proof}
The classical limit of the Novikov deformation of $(A,\cdot)$ given in Proposition \ref{pronp} is exactly $(A,\cdot,[,])$. Thus the conclusion follows from Theorem~\ref{thmnt}.
\end{proof}
\begin{rmk}
    We call $(A,\cdot,[,])$ \textbf{the sub-adjacent transposed Poisson algebra of} $(A,\cdot,\circ)$.
\end{rmk}

\begin{ex}\label{ex-dd}
\label{prod}
    Let $(A,\cdot)$ be a commutative associative algebra and $D$ be a derivation.
With the $\mathbb K$-bilinear operation $\circ$ on $A$ defined by
(\ref{eq-dn}), $(A,\cdot,\circ)$ is a Novikov-Poisson algebra (see
Example \ref{ex-np}). In this case, (\ref{eq-dnp}) takes the form
\begin{equation}x\cdot_{h}y= x\cdot y+ x\circ y~h=x\cdot y+x\cdot D(y)~h,\;\forall x,y\in
A,
\end{equation}
giving a Novikov deformation $(A_h,\cdot_{h})$ of $(A,\cdot)$ by
extending $\mathbb{K}_{h}$-bilinearly the above operation to $A_h$. The classical limit $(A,\cdot,[,])$ in which the Lie bracket is defined by
\begin{equation}\label{eq-cd}
    [x,y]=x\cdot D(y)-y\cdot D(x),\quad \forall x,y\in A,
    \end{equation}
is exactly the typical example of a transposed Poisson algebra given in \cite{BBGW-TPP}.
\end{ex}


\begin{defi}\label{def:NP} Let $(A,\cdot,[,])$ be  a
   transposed Poisson algebra.
    If there is a Novikov-Poisson
   algebra $(A,\cdot,\circ)$ such that $(A,\cdot,[,])$ is the sub-adjacent transposed Poisson algebra of
   $(A,\cdot,\circ)$, then $(A,\cdot,[,])$ is called \textbf{a transposed Poisson algebra  of Novikov-Poisson
type}.

   \end{defi}

\begin{pro}\label{pro-qstpa}
 Any transposed Poisson algebra of Novikov-Poisson type can be quantized.
\end{pro}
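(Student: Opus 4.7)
The plan is to apply Proposition~\ref{pronp} directly, since the substantive work of constructing the Novikov deformation is already packaged there. By Definition~\ref{def:NP}, if $(A,\cdot,[,])$ is a transposed Poisson algebra of Novikov-Poisson type, then there exists a Novikov-Poisson algebra $(A,\cdot,\circ)$ whose sub-adjacent transposed Poisson algebra is $(A,\cdot,[,])$; in particular the Lie bracket is given by $[x,y] = x\circ y - y\circ x$ for all $x,y\in A$.

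First I would fix such a witness Novikov-Poisson algebra $(A,\cdot,\circ)$. Second, I would invoke Proposition~\ref{pronp} with this $(A,\cdot,\circ)$ to obtain the Novikov deformation $(A_h,\cdot_h)$ of $(A,\cdot)$ defined by $x\cdot_h y := x\cdot y + x\circ y\,h$ extended $\mathbb{K}_h$-bilinearly to $A_h$. Third, I would read off the classical limit from the same proposition: equation~(\ref{eq-conp}) states that the induced Lie bracket in the classical limit is $x\circ y - y\circ x$, which by our choice of witness equals the originally prescribed bracket $[,]$. Thus $(A_h,\cdot_h)$ is a quantization of $(A,\cdot,[,])$ in the sense of the definition preceding Theorem~\ref{thmnt}.

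There is no substantive obstacle here; the statement is essentially a corollary of Proposition~\ref{pronp} together with Definition~\ref{def:NP}. The only point worth flagging explicitly is the matching of brackets in the last step, namely that the bracket extracted from $(A_h,\cdot_h)$ via formula~(\ref{lim}) coincides with the prescribed bracket of $(A,\cdot,[,])$; this is immediate from the identity $[x,y] = x\circ y - y\circ x$ built into the notion of sub-adjacent transposed Poisson algebra.
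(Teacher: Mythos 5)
Your proposal is correct and follows exactly the paper's own proof: both fix a witness Novikov-Poisson algebra $(A,\cdot,\circ)$ via Definition~\ref{def:NP}, apply Proposition~\ref{pronp} to get the deformation $x\cdot_h y = x\cdot y + x\circ y\,h$, and match the bracket of the classical limit with the prescribed one via \eqref{eq-conp}. Your explicit flag that the bracket from \eqref{lim} agrees with the given $[,]$ is a slightly more careful articulation of a step the paper leaves implicit, but the argument is the same.
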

\begin{proof}
        Let $(A,\cdot,[,])$ be a transposed Poisson algebra of Novikov-Poisson type. By Definition~\ref{def:NP}, there is a Novikov-Poisson algebra $(A,\cdot,\circ)$ satisfying
        \[  [x,y]=x\circ y-y\circ x,\quad \forall x,y\in A.\]
 Define $\cdot_{h}$ by (\ref{eq-dnp}) and extend $\mathbb{K}_{h}$-bilinearly it to $A_h$.
 Thus, by Proposition~\ref{pronp}, $(\A,\cdot_{h})$ is a quantization of $(A,\cdot,[,])$.
\end{proof}
\begin{rmk}\label{rmk-gdg}
    Let $(A,\circ)$ be a Novikov algebra and $(A,[,])$ be the sub-adjacent Lie algebra of $(A,\circ)$. Then $(A,\cdot,\circ)$ with $\cdot$ being trivial is a Novikov-Poisson algebra,
    and thus $(A,\cdot,[,])$ is a transposed Poisson algebra of
Novikov-Poisson type. So $(A,\cdot,[,])$ can be quantized. Thus
Proposition~\ref{pro-ql}~(\ref{it:aa}) is a special case
    of Proposition~\ref{pro-qstpa}.
\end{rmk}
\begin{cor}
    A unital transposed Poisson algebra $(A,\cdot,[,])$ in the sense that $(A,\cdot)$ is unital can be quantized.
\end{cor}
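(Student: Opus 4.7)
The strategy is to reduce the corollary to Proposition~\ref{pro-qstpa} by exhibiting a unital transposed Poisson algebra as one of Novikov-Poisson type, and then invoking the quantization already constructed for that class. This is the same route alluded to in the introduction via \cite{BBGW-TPP}, so the bulk of the argument is to make the Novikov-Poisson structure explicit from the unit.

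Concretely, denote the unit of $(A,\cdot)$ by $1$ and define $D\colon A\to A$ by $D(x):=[1,x]$. The plan is to substitute $x=1$ into the transposed Leibniz rule \eqref{ttp} to obtain
\[
2D(y)\cdot z=D(y\cdot z)-[y,z],
\]
and then to symmetrize in $y,z$ using antisymmetry of $[,]$. From these two relations I expect to read off both that $D$ is a derivation of $(A,\cdot)$, i.e.\ $D(y\cdot z)=y\cdot D(z)+z\cdot D(y)$, and that the Lie bracket has the canonical shape
\[
[y,z]=y\cdot D(z)-z\cdot D(y).
\]
These are straightforward manipulations once one sets $x=1$, so I do not expect any real obstacle here.

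With $D$ in hand, I would then define $x\circ y:=x\cdot D(y)$. By Example~\ref{ex-np}, $(A,\cdot,\circ)$ is a Novikov-Poisson algebra, and the formula above for $[y,z]$ says precisely that
\[
[x,y]=x\circ y-y\circ x,\qquad \forall\,x,y\in A,
\]
so $(A,\cdot,[,])$ is the sub-adjacent transposed Poisson algebra of $(A,\cdot,\circ)$ in the sense of Definition~\ref{def:NP}. In other words, $(A,\cdot,[,])$ is of Novikov-Poisson type.

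Finally, applying Proposition~\ref{pro-qstpa} to $(A,\cdot,[,])$ produces a Novikov deformation $(A_h,\cdot_h)$ of $(A,\cdot)$ whose classical limit is $(A,\cdot,[,])$, namely the one given by $x\cdot_h y=x\cdot y+x\cdot D(y)\,h$ extended $\mathbb{K}_h$-bilinearly as in Example~\ref{ex-dd}. Hence $(A,\cdot,[,])$ can be quantized. There is no genuine obstacle in this proof; the delicate work has already been absorbed into Proposition~\ref{pronp} and Proposition~\ref{pro-qstpa}, and what remains is the unit-driven construction of $D$.
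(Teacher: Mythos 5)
Your proposal is correct and follows essentially the same route as the paper: define $D(x):=[1_A,x]$, verify via the transposed Leibniz rule that $D$ is a derivation with $[y,z]=y\cdot D(z)-z\cdot D(y)$, conclude that $(A,\cdot,[,])$ is of Novikov-Poisson type, and quantize by Proposition~\ref{pro-qstpa}. The only difference is that you carry out the substitution $x=1_A$ and the symmetrization explicitly (correctly), whereas the paper delegates these verifications to \cite{BBGW-TPP}.
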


\begin{proof} It follows from the fact that any unital transposed Poisson algebra
$(A,\cdot,[,])$ is of Novikov-Poisson type (\cite{BBGW-TPP}).
Explicitly, denote the unit element of $(A,\cdot)$ by $1_{A}$ and
    \delete{set
    \begin{equation}\label{eq-qutpa}
    x\cdot_{h}y=x\cdot y+x\cdot [1_{A},y]~h,\quad\forall x,y\in A.
    \end{equation}
    Extend $\mathbf{K}$-bilinearly the above operation to $A[[h]]$.}
    define a linear map $D\in\mathrm{End}_{\mathbb{K}}(A)$ by
    \[D(x):=[1_{A},x],\forall x\in A.\]
     Then $D$ is a derivation of $(A,\cdot)$
    and $(\ref{eq-cd})$ holds. Thus $(A,\cdot,[,])$ is of Novikov-Poisson type.
\end{proof}
\section{Quantizations of 2-dimensional complex transposed Poisson algebras}\label{s4}
We classify quantizations of $2$-dimensional transposed Poisson
algebras where the Lie algebras are non-abelian up to equivalence.
In this section, we assume that the base field $\mathbb K$ is the
field $\mathbb C$ of complex numbers.

We have studied  the existence of quantizations of transposed
Poisson algebras in the previous sections. However, there might be non-equivalent quantizations of a transposed
Poisson algebra. So we
turn to the problem on the classification of the quantizations of
a transposed Poisson algebra up to equivalence.
As a guide, we consider the case of 2-dimensional
complex transposed Poisson algebras. 

We have shown that a transposed Poisson algebra in which the Lie
algebra is abelian can be quantized. But even for $2$-dimensional
transposed Poisson algebras with abelian Lie brackets, it seems
quite hard to classify their quantizations into mutually
non-equivalent classes. However, for $2$-dimensional transposed
Poisson algebras in which the Lie algebras are non-abelian, we
obtain a complete classification of their quantizations.

Let $(A,\cdot,[,])$ be a $2$-dimensional
transposed Poisson algebra in which the Lie algebra $(A,[,])$ is non-abelian. Let $\{e_{1},e_{2}\}$ be a basis of $A$. By the arXiv version of \cite{BBGW-TPP}, it is isomorphic to one of the following mutually non-isomorphic transposed Poisson algebras whose non-zero products are given by
\begin{enumerate}
  \item[$A^{0,0}$]: $[e_{1},e_{2}]=e_{2}$;\label{c1}
  \item[$A^{0,1}$]: $e_{1}\cdot e_{1}=e_{2},\quad[e_{1},e_{2}]=e_{2}$;\label{c2}
  \item[$A^{\lambda,0}$]:  $e_{1}\cdot e_{1}=\lambda e_{1},\; e_{1}\cdot e_{2}=e_{2}\cdot e_{1}=\lambda e_{2},\;0\neq\lambda\in \mathbb{C},\quad[e_{1},e_{2}]=e_{2}$.\label{c3}
\end{enumerate}

\begin{lem}\label{lem:cc} Let $(A,[,])$ be a 2-dimensional Lie algebra with a
basis $\{e_1,e_2\}$. Assume that
\begin{equation}\label{2com}
            [e_{1},e_{2}]=e_{2}.
        \end{equation}
If $(A,\circ)$ is a Novikov algebra whose sub-adjacent Lie algebra
is $(A,[,])$, then $(A,\circ)$ is exactly of one of the following
forms (all of them are Novikov algebras):
   \begin{equation}\label{eq:ss}e_{1}\circ e_{1}=ae_{1}+be_{2},\; e_{1}\circ e_{2}=(a+1)e_{2},\;e_{2}\circ e_{1}=ae_{2},\;e_{2}\circ e_{2}=0,\end{equation}
    where the parameters $a,b\in\mathbb{C}$. Moreover, for any commutative associative algebra $(A,\cdot)$ in one of the transposed Poisson
    algebras $A^{0,0}$, $A^{0,1}$ and $A^{\lambda,0}$ with $\lambda\ne 0$,
     $(A,\cdot,\circ)$ is a
    Novikov-Poisson algebra for all $a,b\in\mathbb{C}$.
\end{lem}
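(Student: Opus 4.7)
The plan is to prove the lemma in two stages: first, classify all Novikov structures on $A$ whose sub-adjacent Lie bracket satisfies (\ref{2com}); second, verify the Novikov-Poisson compatibility for each of the three associative products. For the classification, I would write a general bilinear operation
$$e_i\circ e_j = \alpha_{ij}\,e_1+\beta_{ij}\,e_2,\qquad i,j\in\{1,2\},$$
with eight unknown scalars. The commutator condition $e_1\circ e_2-e_2\circ e_1=e_2$ coming from (\ref{2com}) immediately forces $\alpha_{12}=\alpha_{21}$ and $\beta_{12}=\beta_{21}+1$, reducing the problem to six unknowns.

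Next I would impose the Novikov identities (\ref{no1}) and (\ref{no2}) on well-chosen triples of basis vectors. Evaluating (\ref{no2}) on $(e_1,e_1,e_2)$ and $(e_2,e_2,e_1)$ yields polynomial relations among the surviving scalars, while (\ref{no1}) applied with $(x,y)=(e_1,e_2)$ simplifies, by virtue of (\ref{2com}), to
$$e_2\circ z = e_1\circ(e_2\circ z)-e_2\circ(e_1\circ z),\qquad z\in\{e_1,e_2\}.$$
Combining these relations systematically, I expect the successive conclusions $\alpha_{12}=\alpha_{21}=0$, $e_2\circ e_2=0$, and $\beta_{21}=\alpha_{11}$; renaming $a:=\alpha_{11}$ and $b:=\beta_{11}$ then produces exactly (\ref{eq:ss}). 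The converse direction, namely that every member of the family (\ref{eq:ss}) satisfies both (\ref{no1}) and (\ref{no2}), is a short direct check on basis triples, simplified considerably by $e_2\circ e_2=0$.

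For the moreover part, I would verify (\ref{np1}) and (\ref{np2}) case by case on basis triples for each of the three associative structures. The case $A^{0,0}$ is immediate since $\cdot$ is the zero product and both sides of each identity vanish. For $A^{0,1}$, the only non-zero associative product is $e_1\cdot e_1=e_2$, and since $e_2\cdot e_j=0$, $e_j\cdot e_2=0$, and $e_2\circ e_2=0$, the non-trivial instances of (\ref{np1}) and (\ref{np2}) collapse to a small finite list, each checked by inspection. For $A^{\lambda,0}$, the structural facts $e_2\cdot e_j=0$ and $e_1\cdot e_j=\lambda e_j$ for $j\in\{1,2\}$ ensure that the two sides of each axiom either both vanish or both equal $\lambda$ times a Novikov relation established in stage one. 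The main potential obstacle is book-keeping in the first stage: the triples used in (\ref{no1}) and (\ref{no2}) must be chosen efficiently so that no consistency relation is inadvertently overlooked, but I do not anticipate any conceptual subtlety.
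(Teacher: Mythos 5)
Your proposal is correct and would go through, but it takes a genuinely different computational route from the paper. You impose the original defining identities \eqref{no1} and \eqref{no2} directly on selected basis triples. Since the unknown structure constants enter these identities \emph{quadratically}, the resulting system contains factored relations that must be combined to kill spurious branches: in your notation one finds (from \eqref{no1} with $z=e_1$) $\beta_{21}(\beta_{21}-\alpha_{11})=0$ and (from \eqref{no2} on $(e_1,e_1,e_2)$) $(\alpha_{11}-\beta_{21})(\beta_{21}+1)=0$, which only jointly force $\alpha_{11}=\beta_{21}$. I checked that your chosen instances do suffice: \eqref{no2} on $(e_1,e_1,e_2)$ and $(e_2,e_2,e_1)$ together with your reduced form of \eqref{no1} (valid, since the left side of \eqref{no1} is $[x,y]\circ z$) yield $\alpha_{22}=0$, then $\alpha_{12}=\alpha_{21}=0$, then $e_2\circ e_2=0$ and $\alpha_{11}=\beta_{21}$, exactly \eqref{eq:ss}. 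The paper instead exploits its Definition-Lemma~\ref{eqvnov}: a Novikov algebra is equivalently \eqref{no2} together with $2[x,y]\circ z=[x\circ z,y]+[x,y\circ z]$, and since the bracket is prescribed by \eqref{2com}, this identity is \emph{linear} in the structure constants; the two substitutions $(x,y,z)=(e_1,e_2,e_1)$ and $(e_1,e_2,e_2)$ immediately give \eqref{eq:ss}, after which \eqref{no2} is observed to hold automatically on the resulting family. So the paper's route trades your quadratic case analysis for two linear computations — this is precisely where your anticipated ``book-keeping'' risk lives, and the paper's equivalence avoids it; your converse check and the paper's closing verification are the same routine computation.

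One concrete slip in your moreover part: for $A^{\lambda,0}$ you assert $e_2\cdot e_j=0$ for $j\in\{1,2\}$, but $e_2\cdot e_1=e_1\cdot e_2=\lambda e_2\neq 0$; as stated your two ``structural facts'' even contradict commutativity. The correct shortcut is that $e_1$ acts as $\lambda\,\mathrm{id}$ (so $(e_1\cdot x)\circ y=\lambda\,x\circ y=e_1\cdot(x\circ y)$ disposes of most instances of \eqref{np1}) and that $e_2\cdot e_2=0$; with this correction the remaining finite check closes, e.g.\ $(e_2\cdot e_1)\circ e_1=a\lambda e_2=e_2\cdot(e_1\circ e_1)$ and $e_1\circ(e_2\cdot e_1)-e_2\circ(e_1\cdot e_1)=\lambda(a+1)e_2-\lambda a e_2=\lambda e_2=[e_1,e_2]\cdot e_1$ for \eqref{np2}. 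This is a local error in one case, not a structural gap.
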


\begin{proof}
Let $(A,\circ)$ be a Novikov algebra whose sub-adjacent Lie algebra
is $(A,[,])$. Set $$e_{1}\circ e_{j}=\sum_{i=1}^{2}\alpha_{ij}e_{i},\;e_{2}\circ e_{j}=\sum_{i=1}^{2}\beta_{ij}e_{i},\;\alpha_{ij},\beta_{ij}\in\mathbb{C},\;1\leq i,j\leq 2.$$
 By (\ref{2com}), we have
 \begin{equation}\label{eq1}
      \alpha_{12}=\beta_{11},\;\alpha_{22}-\beta_{21}=1.
 \end{equation}
  Considering (\ref{nctpa}) with $x=e_1,y=e_2,z=e_1$, we have
 \begin{equation}\label{eq2}
\beta_{11}=0,\;\alpha_{11}=\beta_{21}.
\end{equation}
Considering (\ref{nctpa}) with $x=e_1,y=e_2,z=e_2$, we have
\begin{equation}\label{eq3}
 \beta_{12}=0,\;\alpha_{12}=\beta_{22}.
\end{equation}
Setting $a:=\beta_{21}$ and $b:=\alpha_{21}$, we obtain
(\ref{eq:ss}) by (\ref{eq1}), (\ref{eq2}) and (\ref{eq3}). It is
straightforward to show that the $\mathbb C$-bilinear operation
$\circ$ defined by (\ref{eq:ss}) is a Novikov algebra. Note that
in these cases, (\ref{no2}) holds automatically.

For any commutative associative algebra $(A,\cdot)$ in one of the transposed Poisson algebras $A^{0,0}$, $A^{0,1}$ and $A^{\lambda,0}$ with $\lambda\ne 0$, and any $a,b\in\mathbb{C}$, it is also straightforward to verify that (\ref{np1}) and (\ref{np2}) hold. Hence $(A,\cdot,\circ)$ is a Novikov-Poisson algebra.
\delete{
By definition, it remains to  show (\ref{np1}) and (\ref{np2}) hold in each case.
\begin{enumerate}
    \item $(A,\cdot,[,])=A^{0,0}$. \\
    Since $\cdot$ is a trivial operation,
    then (\ref{np1}) and (\ref{np2}) hold.
     \item $(A,\cdot,[,])=A^{0,1}$.\\
     We have
     $$(e_1\circ e_2)\cdot z-(e_2\circ e_1)\cdot z=0=e_1\circ(e_2\cdot z)-e_2\circ(e_1\cdot z),\;\forall z\in A.$$
    Hence we obtain (\ref{np2}). We obtain (\ref{np1}) because
    \begin{eqnarray*}
        (e_1\cdot e_2)\circ z&=&0=e_1\cdot(e_2\circ z),\\
         (e_1\cdot e_1)\circ e_1&=&ae_2=e_1\cdot(e_1\circ e_1),\\
          (e_1\cdot e_1)\circ e_2&=&0=e_1\cdot(e_1\circ e_2),\\
           (e_2\cdot y)\circ z&=&0=e_2\cdot(y\circ z),\;\forall y,z\in A.
    \end{eqnarray*}
 \item $(A,\cdot,[,])=A^{\lambda,0}$.\\
  In this case, we have
 \[e_1\cdot x=x\cdot e_1=\lambda x,\;\forall x\in A.\]
 Thus
 \[(e_1\cdot x)\circ y=\lambda x\circ y=e_1\cdot(x\circ y),\forall x,y\in A.\]
    Also note that
    \begin{eqnarray*}
         (e_2\cdot e_1)\circ e_1&=&a\lambda e_2=e_2\cdot(e_1\circ e_1),\\
          (e_2\cdot e_1)\circ e_2&=&0=e_2\cdot(e_1\circ e_2),\\
           (e_2\cdot e_2)\circ z&=&0=e_2\cdot(e_2\circ z),\;\forall x, y,z\in A,
    \end{eqnarray*}
    we get (\ref{np1}).
    We obtain (\ref{np2}) since
    \begin{eqnarray*}
        &&(e_1\circ e_2)\cdot e_1-(e_1\circ e_2)\circ e_1=\lambda e_2=e_1\circ(e_2\cdot e_1)-e_2\circ(e_1\cdot e_1),\\
        &&(e_1\circ e_2)\cdot e_2-(e_1\circ e_2)\cdot e_2=0=e_1\circ(e_2\cdot e_2)-e_2\circ(e_1\cdot e_2).
    \end{eqnarray*}
\end{enumerate}}
\end{proof}

\begin{rmk}
In fact, the classification of 2-dimensional complex Novikov algebras whose sub-adjacent Lie algebra are not abelian up to isomorphism was given in \cite{BM1}, that is, the Novikov algebras of the above forms have already been classified up to isomorphism.
\end{rmk}

\begin{cor}
The transposed Poisson
 algebras $A^{0,0}$, $A^{0,1}$ and $A^{\lambda,0}$ with $\lambda\ne 0$ are of Novikov-Poisson type. Hence all of them can be quantized.
\end{cor}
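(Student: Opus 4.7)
The plan is to derive this corollary as an immediate combination of Lemma~\ref{lem:cc} and Proposition~\ref{pro-qstpa}, with essentially no new calculation. Indeed, the substantive content---the verification of the Novikov-Poisson axioms (\ref{np1}) and (\ref{np2}) for each of the three candidate commutative associative products in tandem with the family (\ref{eq:ss})---is exactly what Lemma~\ref{lem:cc} already supplies.

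First, I would fix one of the three transposed Poisson algebras $A^{0,0}$, $A^{0,1}$, or $A^{\lambda,0}$ with $\lambda\ne 0$, all of which have Lie bracket $[e_1,e_2]=e_2$. By Lemma~\ref{lem:cc}, for any $a,b\in\mathbb{C}$ (the simplest choice being $a=b=0$), the $\mathbb{C}$-bilinear operation $\circ$ defined by (\ref{eq:ss}) makes $(A,\circ)$ a Novikov algebra whose sub-adjacent Lie algebra is exactly $(A,[,])$, and the second part of the same lemma guarantees that $(A,\cdot,\circ)$ is a Novikov-Poisson algebra, where $\cdot$ is the commutative associative product of the chosen transposed Poisson algebra.

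Next, I would verify the Novikov-Poisson type condition. Since the commutator of $\circ$ equals $[,]$ by construction, the sub-adjacent transposed Poisson algebra of $(A,\cdot,\circ)$ (in the sense of the remark following the corollary in Section~\ref{s3}) coincides with $(A,\cdot,[,])$. By Definition~\ref{def:NP}, this is precisely what it means for $(A,\cdot,[,])$ to be of Novikov-Poisson type, proving the first assertion.

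Finally, applying Proposition~\ref{pro-qstpa}, every transposed Poisson algebra of Novikov-Poisson type admits a quantization, so all three algebras can be quantized; the explicit quantization is given by extending $x\cdot_h y=x\cdot y+x\circ y\,h$ from (\ref{eq-dnp}) $\mathbb{C}_h$-bilinearly to $A_h$. There is no real obstacle to overcome here: the technical work was already carried out in Lemma~\ref{lem:cc}, and the corollary functions as a bookkeeping summary of its consequences in the two-dimensional classification.
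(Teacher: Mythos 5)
Your proposal is correct and follows exactly the paper's own route: the paper proves this corollary in one line as an immediate consequence of Lemma~\ref{lem:cc} and Proposition~\ref{pro-qstpa}, which is precisely the combination you use. Your extra bookkeeping (fixing $a,b\in\mathbb{C}$, noting that the commutator of $\circ$ from (\ref{eq:ss}) recovers $[e_1,e_2]=e_2$, and invoking Definition~\ref{def:NP}) merely makes explicit what the paper leaves implicit, so there is nothing further to compare.
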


\begin{proof}
It follows immediately from Lemma~\ref{lem:cc} and
Proposition~\ref{pro-qstpa}.
\end{proof}

\delete{
\begin{rmk} Note that by \cite[Example~3.5]{BBGW-TPP}, $A^{0,1}$ and $A^{\lambda,0}$
have been shown to be transposed Poisson algebras of
Novikov-Poisson type, corresponding to the above case that
$a=?,b=?$ and $a=0, b=0$ respectively. \cm{please add}

\sy{In \cite[Example~3.5]{BBGW-TPP}, there is an error in case (5) corresponding to the associative algebra in $A^{0,1}$.
\begin{eqnarray*}
(5)\;(L,\cdot):e_1e_1=e_2,;\;\mathrm{Der}_a(L)= \left(
    \begin{matrix}
        a&0\\
    b&2a
    \end{matrix}
    \right);\; (L,[,]):[e_1,e_2]=be_2.
\end{eqnarray*}
We should have
\begin{eqnarray*}
    [e_1,e_2]&=&e_1D(e_2)-e_2D(e_1)\\
    &=&e_1(2ae_2)-e_2(ae_1+be_2)\\
    &=&0.
\end{eqnarray*}
 }
\end{rmk}
}

Our next task is the classification of their quantizations up to
equivalence.
\begin{lem}\label{lem-sf}
    Let $A$ be a $2$-dimensional vector space and $\{e_{1},e_{2}\}$ be a basis. If there is a $\mathbb{C}_{h}$-bilinear operation $\cdot_{h}$ on $A_h$ such that $(A_h,\cdot_{h})$ is a Novikov algebra over $\mathbb{C}_h$ and
    \begin{equation}\label{eq-comu}
        e_{1}\cdot_{h}e_{2}-e_{2}\cdot_{h}e_{1}\equiv e_{2}~h\pmod{h^2},
    \end{equation}
    then there are $(e_{1})_{h},(e_{2})_{h}\in A_h$ satisfying
    \begin{enumerate}
        \item[(\romannumeral1)] $(e_{1})_{h}\equiv e_{1}\pmod{h},\;(e_{2})_{h}\equiv e_{2}\pmod{h}$;
        \item[(\romannumeral2)] $(e_{1})_{h}\cdot_{h}(e_{1})_{h}=a_{h}(e_{1})_h+b_{h}(e_{2})_h,\;(e_{1})_{h}\cdot_{h}(e_{2})_{h}=(a_{h}+h)(e_{2})_h,\\(e_{2})_{h}\cdot_{h}(e_{1})_{h}= a_{h}(e_{2})_{h},\;(e_{2})_{h}\cdot_{h}(e_{2})_{h}=0$, where $a_{h},b_{h}\in \mathbb{C}_{h}$.
    \end{enumerate}
\end{lem}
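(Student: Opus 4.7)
The plan is to prove Lemma~\ref{lem-sf} in two stages. First, I construct a new $\mathbb{C}_h$-basis $\{(e_1)_h, (e_2)_h\}$ of $A_h$ which reduces to $\{e_1, e_2\}$ modulo $h$ and satisfies the strengthened commutator identity $(e_1)_h \cdot_h (e_2)_h - (e_2)_h \cdot_h (e_1)_h = h(e_2)_h$ exactly (rather than only modulo $h^2$). Second, I pin down the remaining structure constants in this new basis by applying the Novikov axiom (\ref{nctpa}) twice; this second step is essentially the same computation as in the proof of Lemma~\ref{lem:cc}, only now carried out over the ring $\mathbb{C}_h$.

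For the first step, I would introduce the rescaled commutator $\{x, y\}_h := h^{-1}(x \cdot_h y - y \cdot_h x)$. By $\mathbb{C}_h$-bilinearity and (\ref{eq-comu}), the ordinary commutator $x \cdot_h y - y \cdot_h x$ lies in $hA_h$ for all $x, y \in A_h$, so $\{,\}_h$ is a well-defined $\mathbb{C}_h$-bilinear antisymmetric operation. Its Jacobi identity descends from the Jacobi identity for the usual commutator on the Novikov algebra $(A_h, \cdot_h)$ over $\mathbb{C}_h$ (Definition-Lemma~\ref{NL} and Remark~\ref{rmk:ring2}) by dividing by $h^2$, which is legitimate because $A_h$ is $\mathbb{C}_h$-torsion free. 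Thus $(A_h, \{,\}_h)$ is a Lie algebra over $\mathbb{C}_h$ whose reduction modulo $h$ is the $2$-dimensional non-abelian Lie algebra with $[e_1, e_2] = e_2$. Writing $\{e_1, e_2\}_h = \lambda_h e_1 + \mu_h e_2$, the hypothesis forces $\lambda_h \in h\mathbb{C}_h$ and $\mu_h \in 1 + h\mathbb{C}_h$; in particular $\mu_h$ is a unit. I then set $(e_2)_h := \{e_1, e_2\}_h$ and $(e_1)_h := \mu_h^{-1} e_1$. Both reduce to $e_1, e_2$ modulo $h$, and since the change-of-basis matrix has unit determinant, they form a $\mathbb{C}_h$-basis of $A_h$. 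A direct bilinear calculation gives $\{(e_1)_h, (e_2)_h\}_h = (e_2)_h$, equivalently $(e_1)_h \cdot_h (e_2)_h - (e_2)_h \cdot_h (e_1)_h = h(e_2)_h$.

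For the second step, I expand each of the four products $(e_i)_h \cdot_h (e_j)_h$ in the new basis with coefficients in $\mathbb{C}_h$. The commutator identity just established gives two relations among these coefficients. Applying (\ref{nctpa}) in the $\mathbb{C}_h$-Novikov algebra $(A_h, \cdot_h)$ to the triples $((e_1)_h, (e_2)_h, (e_1)_h)$ and $((e_1)_h, (e_2)_h, (e_2)_h)$ and comparing coefficients in the new basis---canceling the resulting factor of $h$, which is legitimate since $\mathbb{C}_h$ is an integral domain---pins down the remaining coefficients and produces exactly the shape claimed in (ii), with $a_h$ playing the role of the constant $a$ and $b_h$ the role of $b$ in Lemma~\ref{lem:cc}.

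The main obstacle is the first step: without this exact normalization of the commutator, the coefficient-matching in the second step would break down, as only information modulo $h^2$ would be available. The crux is that the rescaled commutator $\{,\}_h$ really is a $\mathbb{C}_h$-Lie bracket, and that it can be standardized by an elementary rescaling using the unit $\mu_h^{-1}$; once this is set up, the rest is a mechanical transcription of the proof of Lemma~\ref{lem:cc} to the base ring $\mathbb{C}_h$.
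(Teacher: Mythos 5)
Your proof is correct and takes essentially the same route as the paper's: you first normalize the basis so that $(e_{1})_{h}\cdot_{h}(e_{2})_{h}-(e_{2})_{h}\cdot_{h}(e_{1})_{h}=h\,(e_{2})_{h}$ holds exactly, exploiting that the leading coefficient of the commutator (your $\mu_h$, the paper's $\nu_h$) is a unit of $\mathbb{C}_h$, and then rerun the structure-constant computation of Lemma~\ref{lem:cc} via \eqref{nctpa} over the ring $\mathbb{C}_h$, cancelling the common factor of $h$ by torsion-freeness. Your packaging through the rescaled bracket $\{x,y\}_{h}=h^{-1}(x\cdot_{h}y-y\cdot_{h}x)$ and your unit-determinant choice $(e_{1})_{h}=\mu_h^{-1}e_{1}$, $(e_{2})_{h}=\{e_{1},e_{2}\}_{h}$ are cosmetic variants of the paper's explicit change of basis $(e_{1})_{h}=\nu_{h}^{-1}e_{1}$, $(e_{2})_{h}=\nu_{h}^{-1}\mu_{h}e_{1}+e_{2}$.
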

\begin{proof}
    By (\ref{eq-comu}), we obtain
    \begin{equation}\label{eq-gd1}
e_{1}\cdot_{h}e_{2}-e_{2}\cdot_{h}e_{1}= h(\mu_{h}e_{1}+\nu_{h}e_{2}),
    \end{equation}
where $\mu_{h}$ and $\nu_{h}$ are elements of $\mathbb{C}_{h}$
 satisfying
    \begin{eqnarray*}
        \mu_{h}&\equiv& 0\pmod{h},\\
        \nu_{h}&\equiv& 1\pmod{h}.
    \end{eqnarray*}
 Thus $\nu_{h}$ is an invertible element in the ring $\mathbb{C}_{h}$. Define elements $(e_{1})_{h}$ and $(e_{2})_{h}$ in $A_h$ by
 \begin{eqnarray}
    (e_{1})_{h}&:=&\nu_{h}^{-1}e_{1},\label{b1}\\
(e_{2})_{h}&:=&\nu_{h}^{-1}\mu_{h}e_{1}+e_{2}\label{b2}.
 \end{eqnarray}
 So $(e_{1})_{h}$ and $(e_{2})_{h}$ satisfy Condition (\romannumeral1).

  By (\ref{eq-gd1}), we get
 \begin{equation}\label{eq-gd2}
        (e_{1})_{h}\cdot_{h}(e_{2})_{h}-(e_{2})_{h}\cdot_{h}(e_{1})_{h}= h(e_{2})_{h}.
        \end{equation}
Note that $\{e_{1}, e_{2}\}$ is a basis of the finitely generated
free $\mathbb{C}_h$-module $A_h$. Hence $\{(e_{1})_{h},
(e_{2})_{h}\}$ is also a basis of the finitely generated free
$\mathbb{C}_h$-module $A_h$ by (\ref{b1}) and (\ref{b2}). Set
 $$(e_{1})_{h}\cdot_{h}(e_{j})_{h}=\sum_{i=1}^{2}(\alpha_{ij})_{h}(e_{i})_{h},\;(e_{2})_{h}\cdot_{h}(e_{j})_{h}=\sum_{i=1}^{2}(\beta_{ij})_{h}(e_{i})_{h},\;(\alpha_{ij})_{h},(\beta_{ij})_{h}\in\mathbb{C}_{h},\;1\leq i,j\leq 2.$$
    By (\ref{eq-gd2}), we have
        \begin{equation}\label{eqh1}
            (\alpha_{12})_{h}=(\beta_{11})_{h},\;(\alpha_{22})_{h}-(\beta_{21})_{h}=h.
        \end{equation}
        Define a $\mathbb C_h$-bilinear operation $\{,\}_h$ on $A_h$ by (\ref{brah}).
        By 
Definition-Lemma~\ref{eqvnov} and Remark~\ref{rmk:ring}, we have
        \begin{equation}\label{guodu6}
            2\{x_h,y_h\}\cdot_h z_h=\{x_h\cdot_h z_h,y_h\}+\{x_h,y_h\circ_h z_h\},\;\forall x_h,y_h,z_h\in A_h.
        \end{equation}
        Similar to 
        the proof of Lemma~\ref{lem:cc},
        we can deduce from (\ref{guodu6}) that
        \begin{eqnarray}
          && (\beta_{11})_{h}=0,\;(\alpha_{11})_{h}=(\beta_{21})_{h},\label{eqh2}\\
            &&(\beta_{12})_{h}=0,\;(\alpha_{12})_{h}=(\beta_{22})_{h}.\label{eqh3}
        \end{eqnarray}
Set $a_{h}:=(\beta_{21})_{h}$ and $b_{h}:=(\alpha_{21})_{h}$, then by (\ref{eqh1}), (\ref{eqh2}) and (\ref{eqh3}) we have
\[(e_{1})_{h}\cdot_{h}(e_{1})_{h}=a_{h}(e_{1})_h+b_{h}(e_{2})_h,\;(e_{1})_{h}\cdot_{h}(e_{2})_{h}=(a_{h}+h)(e_{2})_h,\;(e_{2})_{h}\cdot_{h}(e_{1})_{h}= a_{h}(e_{2})_{h},\;(e_{2})_{h}\cdot_{h}(e_{2})_{h}=0.\]
Note that
\begin{eqnarray*}
    ((e_{1})_{h}\cdot_{h}(e_{1})_{h})\cdot_h(e_{2})_{h}&=&a_h(a_h+h)(e_{2})_{h}=((e_{1})_{h}\cdot_{h}(e_{2})_{h})\cdot_h(e_{1})_{h},\\
((e_{2})_{h}\cdot_{h}(e_{1})_{h})\cdot_h(e_{2})_{h}&=&0=((e_{2})_{h}\cdot_{h}(e_{2})_{h})\cdot_h(e_{1})_{h}.
\end{eqnarray*}
Thus the following equation automatically holds.
\[(x_h\cdot_h y_h)\cdot_h z_h=(x_h\cdot_h z_h)\cdot_h y_h,\;\forall x_h,y_h,z_h\in A_h.\]
Hence by 
Definition-Lemma~\ref{eqvnov} and Remark~\ref{rmk:ring} again, we
obtain Condition (\romannumeral2). \delete{
    For all $x_{h},y_{h}\in \A$, define $\mathbb{K}_{h}$-linear maps $L_{\cdot_{h}},R_{\cdot_{h}}: \A\to\mathrm{End}_{\mathbb{K}_{h}}(\A)$ by
    \[L_{\cdot_{h}}(x_h)y_h:=x_h\cdot_h y_h,\; R_{\cdot_{h}}(x_h)y_{h}:=y_h\cdot_h x_h. \]
    Then $(\A,\cdot_{h})$ is a Novikov algebra if and only if
    \begin{eqnarray}
        L_{\cdot_{h}}((e_{1})_{h})L_{\cdot_{h}}((e_{2})_{h})-L_{\cdot_{h}}((e_{2})_{h})L_{\cdot_{h}}((e_{1})_{h})&=&L_{\cdot_{h}}((e_{1})_{h}\cdot_{h}(e_{2})_{h}-(e_{2})_{h}\cdot_{h}(e_{1})_{h}),\label{g2n1}\\
        R_{\cdot_{h}}((e_{1})_{h})R_{\cdot_{h}}((e_{2})_{h})&=&R_{\cdot_{h}}((e_{2})_{h})R_{\cdot_{h}}((e_{1})_{h})\label{2n1}.
    \end{eqnarray}
    By (\ref{eq-gd2}), (\ref{g2n1}) reduces to
    \begin{equation}\label{2n2}
        L_{\cdot_{h}}((e_{1})_{h})L_{\cdot_{h}}((e_{2})_{h})-L_{\cdot_{h}}((e_{2})_{h})L_{\cdot_{h}}((e_{1})_{h})=hL_{\cdot_{h}}((e_{2})_{h}).
    \end{equation}
        Set $$(e_{1})_{h}\cdot_{h}(e_{j})_{h}=\sum_{i=1}^{2}(\alpha_{ij})_{h}(e_{i})_{h},\;(e_{2})_{h}\cdot_{h}(e_{j})_{h}=\sum_{i=1}^{2}(\beta_{ij})_{h}(e_{i})_{h},\;(\alpha_{ij})_{h},(\beta_{ij})_{h}\in\mathbb{K}_{h},\;1\leq i,j\leq 2.$$
        We deduce from (\ref{2n1}) and (\ref{2n2}) that
        \begin{eqnarray}
            \left(
            \begin{matrix}
            (\alpha_{11})_{h}&(\beta_{11})_{h}\\
            (\alpha_{21})_{h}&(\beta_{21})_{h}
            \end{matrix}
            \right)
            \left(
            \begin{matrix}
            (\alpha_{12})_{h}&(\beta_{12})_{h}\\
            (\alpha_{22})_{h}&(\beta_{22})_{h}
            \end{matrix}
            \right)=\left(
            \begin{matrix}
            (\alpha_{12})_{h}&(\beta_{12})_{h}\\
            (\alpha_{22})_{h}&(\beta_{22})_{h}
            \end{matrix}
            \right)
            \left(
            \begin{matrix}
            (\alpha_{11})_{h}&(\beta_{11})_{h}\\
            (\alpha_{21})_{h}&(\beta_{21})_{h}
            \end{matrix}
            \right)\label{eq1}
\end{eqnarray}
        and
\begin{eqnarray} \left(
            \begin{matrix}
            (\alpha_{11})_{h}&(\alpha_{12})_{h}\\
            (\alpha_{21})_{h}&(\alpha_{22})_{h}
            \end{matrix}
            \right)
            \left(
            \begin{matrix}
        (\beta_{11})_{h}&(\beta_{12})_{h}\\
        (\beta_{21})_{h}&(\beta_{22})_{h}
            \end{matrix}
            \right)-    \left(
            \begin{matrix}
            (\beta_{11})_{h}&(\beta_{12})_{h}\\
            (\beta_{21})_{h}&(\beta_{22})_{h}
            \end{matrix}
            \right)
                \left(
            \begin{matrix}
            (\alpha_{11})_{h}&(\alpha_{12})_{h}\\
            (\alpha_{21})_{h}&(\alpha_{22})_{h}
            \end{matrix}
            \right)=    \left(
            \begin{matrix}
            (\beta_{11})_{h}h&(\beta_{12})_{h}h\\
            (\beta_{21})_{h}h&(\beta_{22})_{h}h
            \end{matrix}
            \right)\label{eq2}
        \end{eqnarray}
        respectively.
        Solving the equations in (\ref{eq1}), (\ref{eq2}) and (\ref{eq3}), we
        deduce
        \begin{eqnarray*}
            (\alpha_{11})_{h}&=&(\beta_{21})_{h},\\
            (\alpha_{22})_{h}&=&(\beta_{21})_{h}+h,\\
            (\alpha_{12})_{h}&=&(\beta_{11})_{h}=(\beta_{12})_{h}=(\beta_{22})_{h}=0.
        \end{eqnarray*}
Hence Condition (\romannumeral2) follows by setting
$a_{h}:=(\beta_{21})_{h}$ and $b_{h}:=(\alpha_{21})_{h}$.}
\end{proof}
    Let $A$ be a $2$-dimensional vector space, $\{e_{1},e_{2}\}$ be a basis and $a_{h},b_{h}$ be elements in $\mathbb{C}_{h}$. Define a $\mathbb{C}_{h}$-bilinear operation $\cdot_h$ on $\A$ by
    \[e_{1}\cdot_{h}e_{1}=a_{h}e_{1}+b_{h}e_{2},\;e_{1}\cdot_{h}e_{2}=(a_{h}+h)e_{2},\;e_{2}\cdot_{h}e_{1}=a_{h}e_{2},\;e_{2}\cdot_{h}e_{2}=0.\]
     Note that (\ref{eq-gd1}) holds if we set
     $$\mu_{h}=0,\;\nu_{h}=1.$$ Moreover, in this case, we have
     $$(e_1)_h=e_1,\;(e_2)_h=e_2.$$
   Thus $(\A,\cdot_{h})$ is a Novikov algebra over $\mathbb{C}_h$ by the
   proof of Condition (\romannumeral2) in Lemma~\ref{lem-sf}.
    Since $(\A,\cdot_{h})$ is determined by parameters $a_{h}$ and $b_{h}$, we denote it by $A_{h}^{a_{h},b_{h}}$.
    \begin{pro}\label{eq-q}
        Let $(A,\cdot,[,])$ be a 2-dimensional transposed Poisson
        algebra with a basis $\{e_{1},e_{2}\}$. Assume that
        \begin{equation*}
            [e_{1},e_{2}]=e_{2}.
        \end{equation*}
        Then the following results hold.
        \begin{enumerate}
            \item Let $(A_h,\cdot_{h}^{''})$ be a quantization of $(A,\cdot,[,])$.
            Then there are elements $a_{h}$ and $b_{h}$ in $\mathbb{C}_{h}$ such that
            $A_{h}^{a_h,b_h}$ is a quantization of $(A,\cdot,[,])$, which is equivalent
            to $(A_h,\cdot_{h}^{''})$. \label{e1}
            \item Let $A_{h}^{a_h,b_h}$ and $A_{h}^{a^{\prime}_h, b^{\prime}_h}$ be quantizations of $(A,\cdot,[,])$, where $a_h,b_h,a^{\prime}_h, b^{\prime}_h\in\mathbb{C}_{h}$. Then $A_{h}^{a_h,b_h}$ and $A_{h}^{a^{\prime}_h, b^{\prime}_h}$ are equivalent if and only if
            the following conditions are satisfied.\label{e2}
            \begin{enumerate}
                \item $a_h=a^{\prime}_h$.\label{e2-1}
                \item There are $\epsilon_{h}\equiv 1\pmod{h},\mu_{h}\in \mathbb{C}_{h}$ such that\label{e2-2}
                \begin{equation}
                    b_{h}^{\prime}=b_{h}\epsilon_{h}-\mu_{h}h(a_{h}+h).\label{eqv}
                \end{equation}
            \end{enumerate}
        \end{enumerate}
    \end{pro}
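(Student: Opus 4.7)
For part (\ref{e1}), the plan is to apply Lemma~\ref{lem-sf} directly to $(A_h, \cdot_{h}^{''})$, whose hypothesis (\ref{eq-comu}) is precisely the statement that $(A_h, \cdot_h^{''})$ is a quantization of a transposed Poisson algebra with $[e_1, e_2]=e_2$. The lemma supplies a basis $\{(e_1)_h, (e_2)_h\}$ of $A_h$, with $(e_i)_h \equiv e_i \pmod h$, in which $\cdot_h^{''}$ takes the canonical form defining $A_h^{a_h, b_h}$ for some $a_h, b_h \in \mathbb{C}_h$. The $\mathbb{C}_h$-linear isomorphism $f_h \in \mathrm{End}_{\mathbb{C}_h}(A_h)$ sending $e_i \mapsto (e_i)_h$ will then be the desired equivalence, and since $f_h \equiv \mathrm{id} \pmod h$ it transports the classical limit, so $A_h^{a_h, b_h}$ is itself a quantization of $(A, \cdot, [,])$.

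For part (\ref{e2}), I would handle the ``if'' direction as a direct verification: given (\ref{eqv}), define $f_h$ by $f_h(e_1) := e_1 + \mu_h h \cdot e_2$ and $f_h(e_2) := \epsilon_h \cdot e_2$, and check $f_h(e_i \cdot_h e_j) = f_h(e_i) \cdot_h^{\prime} f_h(e_j)$ for each pair $i, j \in \{1,2\}$ using (\ref{eqv}). For the ``only if'' direction, I would write an arbitrary equivalence $f_h$ as
\[f_h(e_1) = p_h e_1 + q_h e_2, \quad f_h(e_2) = r_h e_1 + s_h e_2,\]
with $p_h, s_h \equiv 1 \pmod h$ and $q_h, r_h \equiv 0 \pmod h$, impose the four homomorphism equations on the basis products, and read off the $e_1$- and $e_2$-coefficients.

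The crucial preliminary step is to show $r_h = 0$. The $e_1$-coefficient of $f_h(e_2 \cdot_h e_2) = 0$ reads $r_h^2 a_h^{\prime} = 0$, while the $e_1$-coefficient of the equation for $e_1 \cdot_h e_2$ reads $(a_h + h) r_h = p_h r_h a_h^{\prime}$. Since $\mathbb{C}_h$ is an integral domain, $r_h \ne 0$ would force $a_h^{\prime} = 0$ and hence $a_h = -h$; propagating these substitutions through the equations from $e_1 \cdot_h e_1$ and $e_2 \cdot_h e_2$ would eventually yield $b_0^{\prime} = 0$ from one equation and $b_0^{\prime} \ne 0$ from another (the latter coming from $s_h \equiv 1 \pmod h$), a contradiction. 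Once $r_h = 0$ is established, the remaining equations simplify: the $e_1$-coefficient of $e_1 \cdot_h e_1$ gives $a_h = p_h a_h^{\prime}$, the $e_2$-coefficient of $e_1 \cdot_h e_2$ gives $a_h + h = p_h(a_h^{\prime} + h)$, and subtracting forces $p_h = 1$ and $a_h = a_h^{\prime}$. Finally, the $e_2$-coefficient of $e_1 \cdot_h e_1$ collapses to $b_h^{\prime} = b_h s_h - q_h(a_h + h)$, which is (\ref{eqv}) upon setting $\epsilon_h := s_h$ and $\mu_h := q_h/h$ (well-defined since $q_h \equiv 0 \pmod h$). The main obstacle will be the exclusion of the $r_h \ne 0$ branch, which requires combining several coefficient equations delicately in $\mathbb{C}_h$.
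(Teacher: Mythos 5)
Your proposal is correct, and for part (\ref{e1}) and the ``if'' half of part (\ref{e2}) it coincides with the paper's proof: the paper likewise deduces (\ref{e1}) by feeding the quantization hypothesis into Lemma~\ref{lem-sf}, and proves sufficiency by defining $f_h$ exactly as you do and checking the four products directly. The genuine difference is in the ``only if'' half of (\ref{e2}). Writing $f_h(e_1)=p_he_1+q_he_2$, $f_h(e_2)=r_he_1+s_he_2$, the paper never imposes all four homomorphism equations; instead it applies $f_h$ to the single antisymmetrized relation $e_1\cdot_he_2-e_2\cdot_he_1=he_2$, obtaining
\[
hf_h(e_2)=f_h(e_1)\cdot_h^{\prime}f_h(e_2)-f_h(e_2)\cdot_h^{\prime}f_h(e_1)=(p_hs_h-q_hr_h)\,he_2,
\]
whose $e_1$-coefficient gives $r_h=0$ at once and whose $e_2$-coefficient gives $hs_h=p_hs_hh$, hence $p_h=1$ since $s_h$ is a unit; this yields the normal form (\ref{df}) with no case analysis, after which only the $e_1\cdot_he_1$ equation is needed to read off $a_h=a_h^{\prime}$ and (\ref{eqv}). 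Your brute-force route reaches the same endpoint, and your flagged branch-exclusion does close as sketched: with $a_h^{\prime}=0$, $a_h=-h$, the $e_2$-coefficient of the $e_2\cdot_he_2$ equation gives $r_hb_h^{\prime}=-s_hh$, which forces $b_h^{\prime}$ to be a unit (your ``$b_0^{\prime}\neq 0$'' from $s_h\equiv 1\pmod h$), while combining it with the $e_1$-coefficient of the $e_1\cdot_he_1$ equation, $b_hr_h=hp_h$, yields $p_hb_h^{\prime}=-b_hs_h$, i.e.\ $b_0^{\prime}=-b_0$, and the $e_2$-coefficient of that same equation gives $b_0^{\prime}=b_0$; hence $b_0^{\prime}=0$, the contradiction you predicted. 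Your remaining computations ($a_h=p_ha_h^{\prime}$, $a_h+h=p_h(a_h^{\prime}+h)$, then $p_h=1$, $a_h=a_h^{\prime}$, and $b_h^{\prime}=b_hs_h-q_h(a_h+h)$ with $\epsilon_h=s_h$, $\mu_h=q_h/h$) are all correct. Still, it is worth noting that the entire $r_h\neq 0$ branch is avoidable inside your own framework: the $e_1$-coefficients of the $e_1\cdot_he_2$ and $e_2\cdot_he_1$ equations read $(a_h+h)r_h=p_hr_ha_h^{\prime}$ and $a_hr_h=p_hr_ha_h^{\prime}$, and subtracting them gives $hr_h=0$, hence $r_h=0$ unconditionally since $\mathbb{C}_h$ is a domain --- this subtraction is precisely the paper's commutator trick in coordinates, and it is what the paper's approach buys: the antisymmetrized relation alone pins down the shape of $f_h$, so the full system of equations is needed only for the sufficiency check.
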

\begin{proof}
    (\ref{e1}). Since $(A_h,\cdot_{h}^{''})$ is a quantization of $(A,\cdot,[,])$, we have
    \[\frac{e_{1}\cdot_{h}^{''}e_{2}-e_{2}\cdot_{h}^{''}e_{1}}{h}\equiv e_2\pmod h.\]
    Then by Lemma~\ref{lem-sf}, there are $(e_{1})_{h},(e_{2})_{h}\in A_h$ and $a_{h},b_{h}\in \mathbb{C}_{h}$ such that
    \begin{eqnarray*}
        (e_{1})_{h}&\equiv& e_{1}\pmod{h},\;(e_{2})_{h}\equiv e_{2}\pmod{h},\\
        (e_{1})_{h}\cdot_{h}^{''}(e_{1})_{h}&=&a_{h}(e_{1})_h+b_{h}(e_{2})_h,\;(e_{1})_{h}\cdot_{h}^{''}(e_{2})_{h}=(a_{h}+h)(e_{2})_h,
        \\(e_{2})_{h}\cdot_{h}^{''}(e_{1})_{h}&=& a_{h}(e_{2})_{h},\;(e_{2})_{h}\cdot_{h}^{''}(e_{2})_{h}=0.
    \end{eqnarray*}
    Thus $A_{h}^{a_h,b_h}$ is a quantization of $(A,\cdot,[,])$, which is equivalent
    to $(A_h,\cdot_{h}^{''})$.

(\ref{e2}).
    Assume that $(A_h,\cdot_h):=A_{h}^{a_{h},b_{h}}$ and $(A_h,\cdot^{\prime}_{h}):=A_{h}^{a_{h}^{\prime},b_{h}^{\prime}}$ are equivalent and $f_{h}: A_{h}^{a_{h},b_{h}}\to A_{h}^{a_{h}^{\prime},b_{h}^{\prime}}$ is a
 $\mathbb{C}_{h}$-linear
  isomorphism giving the equivalence between them. Since
    \[f_{h}(e_{1})\cdot_{h}^{\prime}f_{h}(e_{2})-f_{h}(e_{2})\cdot_{h}^{\prime}f_{h}(e_{1})=f_{h}(e_{1}\cdot_{h}e_{2}-e_{2}\cdot_{h}e_{1})=hf_{h}(e_{2}),\]
there are $\epsilon_{h}\equiv 1\pmod{h}, \mu_{h}\in
\mathbb{C}_{h}$ such that
\begin{equation}\label{df}
    f_{h}(e_{2})=\epsilon_{h}e_{2},\;
    f_{h}(e_{1})=e_{1}+\mu_{h}he_{2}.
\end{equation}
Thus we have \begin{eqnarray*}
    f_{h}(e_{1})\cdot_{h}^{\prime}f_{h}(e_{1})&=&(e_{1}+\mu_{h}he_{2})\cdot_{h}^{\prime}(e_{1}+\mu_{h}he_{2})\\
    &=&a^{'}_he_{1}+(2\mu_{h}ha^{'}_h+\mu_{h}h^{2}+b_{h}^{'})e_2,\\
    f_{h}(e_{1}\cdot_h e_{1})&=&a_{h}e_{1}+(\mu_{h}ha_{h}+b_{h}\epsilon_{h})e_2.
\end{eqnarray*}
Note that $A_h$ is a finitely generated $\mathbb{C}_h$-module with
a basis $\{(e_{1})_{h}, (e_{2})_{h}\}$.
    Then the ``only if" part of (\ref{e2}) is a consequence of
    $$f_{h}(e_{1})\cdot_{h}^{\prime}f_{h}(e_{1})=f_{h}(e_{1}\cdot_h e_{1}).$$

    Conversely, assume that (\ref{e2-1}) and (\ref{e2-2}) hold. Define a $\mathbb{C}_h$-linear map $f_{h}: A_{h}^{a_{h},b_{h}}\to A_{h}^{a_{h}^{\prime},b_{h}^{\prime}}$ by (\ref{df}).
    Thus $$f_{h}(x)\equiv x\pmod h,\;\forall x\in A.$$
    It is straightforward to check
    \[f_{h}(e_{i})\cdot_{h}^{\prime}f_{h}(e_{j})=f_{h}(e_{i}\cdot_{h}e_{j}),\; 1\leq i,j\leq 2.\]
    Hence $A_{h}^{a_{h},b_{h}}$ and $A_{h}^{a_{h}^{\prime},b_{h}^{\prime}}$ are equivalent.
\end{proof}
\begin{cor}\label{corq2}
    \begin{enumerate}
        \item\label{a} Any quantization of $A^{0,0}$ is equivalent to one of the following mutually non-equivalent cases:
        \begin{enumerate}
            \item $A_{h}^{-h,b_{m}h^{m}}$: $e_{1}\cdot_{h}e_{1}=-he_{1}+b_{m}h^{m}e_{2},\;e_{1}\cdot_{h}e_{2}=0,\;e_{2}\cdot_{h}e_{1}=-he_{2},\;e_{2}\cdot_{h}e_{2}=0$, \\where $m\in\mathbb{Z}^{+},\;0\neq b_m\in\mathbb{C}$;\label{i}
            \item $A_{h}^{a_h,0}$: $e_{1}\cdot_{h}e_{1}=a_{h}e_{1},\;e_{1}\cdot_{h}e_{2}=(a_{h}+h)e_{2},\;e_{2}\cdot_{h}e_{1}=a_{h}e_{2},\;e_{2}\cdot_{h}e_{2}=0$,\\ where $a_{h}\in h\mathbb{C}_{h}$;\label{ii}
            \item $A_{h}^{a_h,b_{1}h}$: $e_{1}\cdot_{h}e_{1}=a_{h}e_{1}+b_{1}he_{2},\;e_{1}\cdot_{h}e_{2}=(a_{h}+h)e_{2},\;e_{2}\cdot_{h}e_{1}=a_{h}e_{2},\;e_{2}\cdot_{h}e_{2}=0$,\\ where $-h\neq a_{h}\in h\mathbb{C}_{h},\;0\neq b_1\in\mathbb{C}$.\label{iii}
        \end{enumerate}
    \item\label{b} Any quantization of $A^{0,1}$ is equivalent to one of the following mutually non-equivalent quantizations of $A^{0,1}$:
    $$A_{h}^{a_h,1}:
    e_{1}\cdot_{h}e_{1}=a_{h}e_{1}+e_{2},\;e_{1}\cdot_{h}e_{2}=(a_{h}+h)e_{2},\;e_{2}\cdot_{h}e_{1}=a_{h}e_{2},\;e_{2}\cdot_{h}e_{2}=0,$$where $a_{h}\in h\mathbb{C}_{h}$.
    \item\label{c} Any quantization of $A^{\lambda,0}$ with $\lambda\ne 0$ is equivalent to one of the following mutually non-equivalent quantizations of $A^{\lambda,0}$:
    $$A_{h}^{\lambda_h,0}:
    e_{1}\cdot_{h} e_{1}=\lambda_{h}e_{1},\; e_{1}\cdot_{h}e_{2}=(\lambda_{h}+h)e_{2},\;e_{2}\cdot_{h}e_{1}=\lambda_{h}e_{2},\;e_{2}\cdot_{h}e_{2}=0,$$ where $\lambda_{h}\in \mathbb{C}_{h}$ satisfying $\lambda_{h}\equiv\lambda\pmod{h}$.
    \end{enumerate}
\end{cor}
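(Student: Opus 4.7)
The plan is to apply Proposition~\ref{eq-q} systematically to each of $A^{0,0}$, $A^{0,1}$ and $A^{\lambda,0}$. By part~(\ref{e1}) of that proposition, every quantization is equivalent to some $A_h^{a_h,b_h}$; the admissible pairs are cut out by reducing $\cdot_h$ modulo $h$. This yields the constraints $a_h,b_h\in h\mathbb{C}_h$ for $A^{0,0}$, $a_h\in h\mathbb{C}_h$ and $b_h\equiv 1\pmod h$ for $A^{0,1}$, and $a_h\equiv\lambda\pmod h$ with $b_h\in h\mathbb{C}_h$ for $A^{\lambda,0}$.

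By part~(\ref{e2}) of Proposition~\ref{eq-q}, $a_h$ is a complete invariant of the equivalence class, while $b_h$ varies modulo $b'_h=b_h\epsilon_h-\mu_h h(a_h+h)$ with $\epsilon_h\equiv 1\pmod h$ and $\mu_h\in\mathbb{C}_h$. The three cases of the corollary are distinguished by the behavior of $a_h+h$. For $A^{\lambda,0}$, $a_h+h\equiv\lambda\neq 0\pmod h$, so $h(a_h+h)$ generates the ideal $h\mathbb{C}_h$; taking $\epsilon_h=1$ and $\mu_h=b_h/(h(a_h+h))$ kills $b_h$ entirely and gives the canonical form $A_h^{\lambda_h,0}$ of~(\ref{c}). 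For $A^{0,1}$, $b_h$ is itself a unit in $\mathbb{C}_h$, so $\epsilon_h=b_h^{-1}$ (with $\mu_h=0$) yields $b'_h=1$, producing the canonical form $A_h^{a_h,1}$ of~(\ref{b}).

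The delicate case is $A^{0,0}$, with both $a_h,b_h\in h\mathbb{C}_h$. If $a_h=-h$, then $h(a_h+h)=0$ and the equivalence reduces to $b'_h=b_h\epsilon_h$. Writing a nonzero $b_h$ as $b_m h^m+O(h^{m+1})$ with $b_m\neq 0$, one constructs $\epsilon_h\equiv 1\pmod h$ successively killing the coefficients of $h^{m+1},h^{m+2},\ldots$ in $b_h\epsilon_h$, so that $b_h\sim b_m h^m$; since the constant term of $\epsilon_h$ is pinned at $1$, the pair $(m,b_m)$ is a complete invariant, recovering~(\ref{i}) (and, when $b_h=0$, the $a_h=-h$ member of~(\ref{ii})). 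If instead $-h\neq a_h\in h\mathbb{C}_h$, then $h(a_h+h)$ has positive valuation in $\mathbb{C}_h$; combining $\epsilon_h$ (which fixes the leading coefficient of $b_h$ while adjusting higher-order terms) with $\mu_h$-translation, one reduces $b_h$ to $b_1 h$, where $b_1$ is the coefficient of $h$ in the original $b_h$, and the dichotomy $b_1=0$ versus $b_1\neq 0$ gives the remaining instances of~(\ref{ii}) and~(\ref{iii}).

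The main obstacle is the $A^{0,0}$ analysis: the vanishing of $a_h+h$ when $a_h=-h$ leaves $\epsilon_h$ as the only tool for simplifying $b_h$, which forces both the valuation $m$ and the leading coefficient $b_m$ to be genuine invariants rather than parameters one can normalize away. Verifying the mutual non-equivalence of the representatives listed in~(\ref{i})--(\ref{iii}) then amounts to a careful order-by-order comparison of the identity $b'_h=b_h\epsilon_h-\mu_h h(a_h+h)$ modulo suitable powers of $h$, using crucially that $\epsilon_h$ has constant term~$1$ and that $h(a_h+h)$ has strictly positive $h$-adic valuation in each subcase.
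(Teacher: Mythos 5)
Your strategy is exactly the paper's: reduce an arbitrary quantization to some $A_h^{a_h,b_h}$ via Proposition~\ref{eq-q}~(\ref{e1}), read off the constraints on $(a_h,b_h)$ modulo $h$, and normalize $b_h$ using the relation $b'_h=b_h\epsilon_h-\mu_h h(a_h+h)$ from Proposition~\ref{eq-q}~(\ref{e2}). Your treatment of parts (\ref{b}) and (\ref{c}), and of the subcase $a_h=-h$ in part (\ref{a}), agrees with the paper (the paper normalizes $b_h$ to $b_mh^m$ in one stroke via $\epsilon_h=b_mh^m/b_h$ rather than by your successive corrections; this difference is immaterial). However, there is a genuine gap in your remaining subcase: the claim that for every $-h\neq a_h\in h\mathbb{C}_h$ one can reduce $b_h$ to $b_1h$, where $b_1$ is the $h$-coefficient of $b_h$. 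The translation term $\mu_h h(a_h+h)$ ranges only over the ideal $h^{k+1}\mathbb{C}_h$, where $k$ is the $h$-adic valuation of $a_h+h$, while scaling by $\epsilon_h\equiv 1\pmod h$ preserves both the valuation and the leading coefficient of $b_h$. The ``strictly positive $h$-adic valuation'' of $h(a_h+h)$ that you invoke is therefore not sufficient; your reduction implicitly requires $k=1$, whereas $a_h\neq -h$ in $h\mathbb{C}_h$ allows any $k\geq 1$.

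Concretely, take $a_h=-h+h^2$ (so $k=2$) and $b_h=h^2$. Then $A_h^{-h+h^2,\,h^2}$ is a Novikov deformation (the construction preceding Proposition~\ref{eq-q} shows $A_h^{a_h,b_h}$ is a Novikov algebra over $\mathbb{C}_h$ for all $a_h,b_h$) whose classical limit is $A^{0,0}$; yet any equivalent $A_h^{a'_h,b'_h}$ must have $a'_h=a_h$ by (\ref{e2-1}) and $b'_h=h^2\epsilon_h-\mu_h h^3$, which has $h$-adic valuation exactly $2$. Hence $b'_h$ is neither $0$ (case (\ref{ii})) nor $b_1h$ with $b_1\neq 0$ (case (\ref{iii})), and case (\ref{i}) is excluded since $a_h\neq -h$; so your dichotomy ``$b_1=0$ versus $b_1\neq 0$'' does not classify. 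You should be aware that the paper's own proof commits the identical error: in its case (ii)-(2) it asserts that $b_h/(h(a_h+h))$ is a well-defined element of $\mathbb{C}_h$ for every $b_h\in h^2\mathbb{C}_h$, which holds only when $\mathrm{val}(a_h+h)=1$. So you have faithfully reproduced the published argument, gap included; a correct argument (and a corrected statement of part (\ref{a})) must treat the subcase $a_h+h\in h^2\mathbb{C}_h\setminus\{0\}$ separately, where, for $\mathrm{val}(b_h)\leq\mathrm{val}(a_h+h)$, the pair consisting of the valuation and the leading coefficient of $b_h$ is a complete invariant, producing additional mutually non-equivalent representatives $A_h^{a_h,\,b_mh^m}$ with $2\leq m\leq \mathrm{val}(a_h+h)$ and $0\neq b_m\in\mathbb{C}$.
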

\delete{
\begin{ex}
\begin{itemize}
    \item Quantizations of (\ref{c1}).
    \begin{enumerate}
        \item[(1)]\label{(1)} $A_{h}^{-h,bh^{m}}$: $e_{1}\cdot_{h}e_{1}=-he_{1}+bh^{m}e_{2},\;e_{1}\cdot_{h}e_{2}=0,\;e_{2}\cdot_{h}e_{1}=-he_{2},\;e_{2}\cdot_{h}e_{2}=0$, where $ b\in\mathbb{K},\;m\in\mathbb{Z}^{+}$;
        \item[(2)] $A_{h}^{a_h,bh}$: $e_{1}\cdot_{h}e_{1}=a_{h}e_{1}+bhe_{2},\;e_{1}\cdot_{h}e_{2}=(a_{h}+h)e_{2},\;e_{2}\cdot_{h}e_{1}=a_{h}e_{2},\;e_{2}\cdot_{h}e_{2}=0$, where $-h\neq a_{h}\in h\mathbb{K}_{h}, b\in\mathbb{K}$.
    \end{enumerate}
    \item Quantizations of (\ref{c2}). \\
    $A_{h}^{a_h,1}$:
$e_{1}\cdot_{h}e_{1}=a_{h}e_{1}+e_{2},\;e_{1}\cdot_{h}e_{2}=(a_{h}+h)e_{2},\;e_{2}\cdot_{h}e_{1}=a_{h}e_{2},\;e_{2}\cdot_{h}e_{2}=0$, where $a_{h}\in h\mathbb{C}_{h}$.
        \item Quantizations of (\ref{c3}).\\
        $A_{h}^{\lambda_h,0}$:
    $e_{1}\cdot_{h} e_{1}=\lambda_{h}e_{1},\; e_{1}\cdot_{h}e_{2}=(\lambda_{h}+h)e_{2},\;e_{2}\cdot_{h}e_{1}=\lambda_{h}e_{2},\;e_{2}\cdot_{h}e_{2}=0$, where $\lambda_{h}\in \mathbb{C}_{h}$ satisfying $\lambda_{h}\equiv\lambda\pmod{h}$.
\end{itemize}
\end{ex}
}
\begin{proof}
      (\ref{a}). Let $(A_h,\cdot_h)$ be a quantization of $A^{0,0}$.
      By Proposition~\ref{eq-q}~(\ref{e1}),
there are elements $a_{h}$ and $b_{h}$ in $\mathbb{C}_{h}$ such that
      $A_{h}^{a_h,b_h}$ is a quantization which is equivalent
      to $(A_h,\cdot_{h})$. Since  $A_{h}^{a_h,b_h}$ is a quantization of $A^{0,0}$, we have
     $$a_{h}\equiv b_{h}\equiv 0\pmod h.$$
     \item[(i)] $a_{h}=-h$ and $b_h\neq 0$. \\Let $b^{\prime}_{h}$ be an element in $h\mathbb{C}_h$. Then  by Proposition~\ref{eq-q}~(\ref{e2}), $A_{h}^{-h,b_{h}}$ and $A_{h}^{-h,b^{\prime}_{h}}$ are equivalent
     if and only if there is an $\epsilon_{h}\equiv 1\pmod{h}\in \mathbb{C}_{h}$ such that
     \begin{equation}
        b^{\prime}_{h}=b_{h}\epsilon_{h}\label{-h}.
     \end{equation}
   Choose $m\in\mathbb{Z}^{+}$ such that $b_{h}\in
h^{m}\mathbb{C}_{h}\setminus h^{m+1}\mathbb{C}_{h}$. Then let
$b_m$ be a nonzero element in $\mathbb{C}$ satisfying
    $$b_{h}\equiv b_mh^{m}\pmod{h^{m+1}}.$$
    Note that $\frac{b_m h^m}{b_{h}}$ is a well-defined
    element in
    $\mathbb{C}_h$ and
    $$\frac{b_m h^m}{b_{h}}\equiv 1\pmod h.$$
    Thus we take $\epsilon_{h}=\frac{b_mh^{m}}{b_h}$
     in
    (\ref{-h}) and then show that $A_{h}^{-h,b_{h}}$ and $A_{h}^{-h,b_m h^{m}}$ are equivalent.
\item[(ii)-(1)] $a_{h}=-h$ and $b_h=0$. \\Then $A_{h}^{a_h,b_{h}}=A_{h}^{-h,0}$.
\item[(ii)-(2)] $a_{h}\neq-h$ and $b_{h}\in h^{2}\mathbb{C}_{h}$. \\
Then $\frac{b_{h}}{h(a_{h}+h)}$ is a well-defined element in $\mathbb{C}_{h}$. Thus $A_{h}^{a_h,b_{h}}$ is equivalent to $A_{h}^{a_h,0}$ by taking $\epsilon_{h}=1$ and $\mu_{h}=\frac{b_{h}}{h(a_{h}+h)}$ in (\ref{eqv}).
\item[(iii)] $a_{h}\neq-h$ and $b_{h}\in h\mathbb{C}_{h}\setminus h^{2}\mathbb{C}_{h}$.\\ Let $b_1$ be the nonzero element in $\mathbb{C}$ that satisfies
$$b_{h}\equiv b_1 h\pmod{h^{2}}.$$ Thus $\frac{b_1 h}{b_{h}}$ is a well-defined element in $\mathbb{C}_h$ and
$$\frac{b_1 h}{b_{h}}\equiv 1\pmod h.$$
So $A_{h}^{a_h,b_{h}}$ and $A_{h}^{a_h,b_1 h}$ are equivalent by taking $\epsilon_{h}=\frac{b_1 h}{b_{h}}$ and $\mu_{h}=0$ in
(\ref{eqv}).\\
Note that these cases are mutually non-equivalent. Thus the conclusion follows.

(\ref{b}) and (\ref{c}) follow by similar arguments.
\end{proof}

 Since all 2-dimensional transposed Poisson algebras with
    non-abelian Lie algebras are of Novikov-Poisson type, we list the
    corresponding quantizations given by (\ref{eq-dnp}) as
    follows.
    \begin{pro}
       \begin{enumerate}
        \item For $A^{0,0}$ which is the sub-adjacent transposed Poisson algebra of the Novikov-Poisson algebra given by
        \begin{eqnarray*}
            &&e_i\cdot e_j=0,\;1\le i,j\le 2,\\
            &&e_{1}\circ e_{1}=a_1e_{1}+b_1e_{2},\; e_{1}\circ e_{2}=(a_1+1)e_{2},\;e_{2}\circ e_{1}=a_1e_{2},\;e_{2}\circ e_{2}=0,
        \end{eqnarray*}
        where $a_1,b_1\in\mathbb{C}$, the corresponding quantization given by {\rm (\ref{eq-dnp})} is exactly $A_{h}^{a_1h,b_1h}$.
        \item For $A^{0,1}$ which is the sub-adjacent transposed Poisson algebra of the Novikov-Poisson algebra given by
        \begin{eqnarray*}
        &&e_1\cdot e_1=e_2,\;e_1\cdot e_2=e_2\cdot e_1=e_2\cdot e_2=0,\\
    &&e_{1}\circ e_{1}=a_1e_{1}+b_1e_{2},\; e_{1}\circ e_{2}=(a_1+1)e_{2},\;e_{2}\circ e_{1}=a_1e_{2},\;e_{2}\circ e_{2}=0,
        \end{eqnarray*}
        where $a_1,b_1\in\mathbb{C}$, the corresponding quantization given by {\rm (\ref{eq-dnp})} is exactly $A_{h}^{a_1h,1+b_1h}$ which is equivalent to $A_{h}^{a_1h,1}$.
        \item For $A^{\lambda,0}$ with $\lambda\ne 0$ which is the sub-adjacent transposed Poisson algebra of the Novikov-Poisson algebra given by
        \begin{eqnarray*}
        &&e_{1}\cdot e_{1}=\lambda e_{1},\; e_{1}\cdot e_{2}=e_{2}\cdot e_{1}=\lambda e_{2},\;e_2\cdot e_2=0,\\
        &&e_{1}\circ e_{1}=\lambda_{1}e_{1}+b_1e_{2},\; e_{1}\circ e_{2}=(\lambda_{1}+1)e_{2},\;e_{2}\circ e_{1}=\lambda_{1}e_{2},\;e_{2}\circ e_{2}=0,
        \end{eqnarray*}
        where $\lambda_{1},b_1\in\mathbb{C}$, the corresponding quantization given by {\rm (\ref{eq-dnp})}
         is exactly $A_{h}^{\lambda+\lambda_{1}h,b_1h}$ which is equivalent to $A_{h}^{\lambda+\lambda_{1}h,0}$.
    \end{enumerate}
    \end{pro}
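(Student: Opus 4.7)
The plan is to verify each of the three assertions by direct substitution into the deformation formula \eqref{eq-dnp} and then, where needed, apply the equivalence criterion of Proposition~\ref{eq-q}~(\ref{e2}). Concretely, for each of the three Novikov-Poisson structures displayed, I would expand
\[
e_i\cdot_h e_j \;=\; e_i\cdot e_j + (e_i\circ e_j)\,h,\qquad 1\le i,j\le 2,
\]
read off the four resulting products in the basis $\{e_1,e_2\}$, and match them against the defining table of $A_h^{a_h,b_h}$. This step is pure computation and uses only the normal form from Lemma~\ref{lem:cc} together with the given commutative associative structure; no further input is required.

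For $A^{0,0}$ the commutative product is trivial so the computation collapses to $x\cdot_h y=(x\circ y)h$, producing directly $A_h^{a_1 h,\,b_1 h}$. For $A^{0,1}$ only the extra term $e_1\cdot e_1=e_2$ appears, yielding $A_h^{a_1 h,\,1+b_1 h}$. For $A^{\lambda,0}$ with $\lambda\neq 0$, the associative multiplications contribute $\lambda e_1$, $\lambda e_2$, $\lambda e_2$ to the three nonzero deformed products, yielding $A_h^{\lambda+\lambda_1 h,\,b_1 h}$. These three identifications are the content of the ``is exactly'' parts of the claims.

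For the ``equivalent to'' statements in cases 2 and 3, I would invoke Proposition~\ref{eq-q}~(\ref{e2}): the first coordinates $a_h$ already agree, so I only need to produce $\epsilon_h\equiv 1\pmod{h}$ and $\mu_h\in\mathbb{C}_h$ solving \eqref{eqv}. For $A^{0,1}$, the target equation $1=(1+b_1 h)\epsilon_h-\mu_h h(a_1 h+h)$ is solved by $\mu_h=0$ and $\epsilon_h=(1+b_1 h)^{-1}$, which is a well-defined unit of $\mathbb{C}_h$ congruent to $1$ modulo $h$. For $A^{\lambda,0}$, the target equation $0=b_1 h\,\epsilon_h-\mu_h h(\lambda+\lambda_1 h+h)$ is solved by $\epsilon_h=1$ and $\mu_h=b_1/(\lambda+\lambda_1 h+h)$; this is well-defined precisely because $\lambda+\lambda_1 h+h$ has nonzero constant term $\lambda$ and is therefore invertible in $\mathbb{C}_h$.

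Since every step reduces to a direct expansion or to solving a single linear equation in $\mathbb{C}_h$, there is no genuine obstacle; the only point demanding care is checking that the specific elements of $\mathbb{C}_h$ used to build $\epsilon_h$ and $\mu_h$ are actually units, which is immediate from examining their constant terms. Thus the proof should be short and essentially algorithmic, and it is naturally organized as three parallel cases following the statement's own enumeration.
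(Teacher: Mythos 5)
Your proposal is correct and takes essentially the same route as the paper, which states this proposition without a written proof precisely because it reduces to the direct expansion of \eqref{eq-dnp} against the normal form of Lemma~\ref{lem:cc} together with the equivalence criterion of Proposition~\ref{eq-q}~(\ref{e2}). Your explicit witnesses, $\epsilon_h=(1+b_1h)^{-1}$, $\mu_h=0$ for $A^{0,1}$ and $\epsilon_h=1$, $\mu_h=b_1/(\lambda+\lambda_1h+h)$ for $A^{\lambda,0}$ (invertibility guaranteed by the nonzero constant terms), are exactly the choices implicit in the analogous cases of the proof of Corollary~\ref{corq2}.
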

    \delete{
\begin{rmk}
    Since all 2-dimensional transposed Poisson algebras with
    non-abelian Lie algebras are of Novikov-Poisson type, we list the
    corresponding quantizations given by (\ref{eq-dnp})
    as
    follows.
     \begin{itemize}
        \item For $A^{0,0}$ which is the sub-adjacent transposed Poisson algebra of the Novikov-Poisson algebra given by
        \begin{eqnarray*}
            &&e_i\cdot e_j=0,\;1\le i,j\le 2,\\
            &&e_{1}\circ e_{1}=a_1e_{1}+b_1e_{2},\; e_{1}\circ e_{2}=(a_1+1)e_{2},\;e_{2}\circ e_{1}=a_1e_{2},\;e_{2}\circ e_{2}=0,
        \end{eqnarray*}
        where $a_1,b_1\in\mathbb{K}$, the corresponding quantization given by (\ref{eq-dnp}) is exactly $A_{h}^{a_1h,b_1h}$.
        \item For $A^{0,1}$ which is the sub-adjacent transposed Poisson algebra of the Novikov-Poisson algebra given by
        \begin{eqnarray*}
        &&e_1\cdot e_1=e_2,\;e_1\cdot e_2=e_2\cdot e_1=e_2\cdot e_2=0,\\
    &&e_{1}\circ e_{1}=a_1e_{1}+b_1e_{2},\; e_{1}\circ e_{2}=(a_1+1)e_{2},\;e_{2}\circ e_{1}=a_1e_{2},\;e_{2}\circ e_{2}=0,
        \end{eqnarray*}
        where $a_1,b_1\in\mathbb{K}$, the corresponding quantization given by (\ref{eq-dnp}) is exactly $A_{h}^{a_1h,1+b_1h}$ which is equivalent to $A_{h}^{a_1h,1}$.
        \item For $A^{\lambda,0}$ with $\lambda\ne 0$ which is the sub-adjacent transposed Poisson algebra of the Novikov-Poisson algebra given by
        \begin{eqnarray*}
        &&e_{1}\cdot e_{1}=\lambda e_{1},\; e_{1}\cdot e_{2}=e_{2}\cdot e_{1}=\lambda e_{2},\;e_2\cdot e_2=0,\\
        &&e_{1}\circ e_{1}=\lambda_{1}e_{1}+b_1e_{2},\; e_{1}\circ e_{2}=(\lambda_{1}+1)e_{2},\;e_{2}\circ e_{1}=\lambda_{1}e_{2},\;e_{2}\circ e_{2}=0,
        \end{eqnarray*}
        where $\lambda_{1},b_1\in\mathbb{K}$, the corresponding quantization given by (\ref{eq-dnp}) is exactly $A_{h}^{\lambda+\lambda_{1}h,b_1h}$ which is equivalent to $A_{h}^{\lambda+\lambda_{1}h,0}$.
    \end{itemize}
\end{rmk}
} \noindent{\bf Acknowledgments.} This work is supported by NSFC
(11931009, 12271265, 12261131498, 12326319), the Fundamental
Research Funds for the Central Universities and Nankai Zhide
Foundation. The authors thank the referees for valuable
suggestions.

\smallskip

\noindent {\bf Declaration of interests. } The authors have no
conflicts of interest to disclose.

\smallskip

\noindent {\bf Data availability. } Data sharing is not applicable
to this article as no new data were created or analyzed in this
study.

\vspace{-.5cm}



\begin{thebibliography}{99}


\bibitem{Bai}
    C.~Bai, An Introduction to Pre-Lie Algebras, In ``Algebra and Applications 1: Nonssociative Algebras and Categories". Wiley Online Library, 2021, 245-273.

\bibitem{BBGW-TPP}
C.~Bai, R.~Bai, L.~Guo and Y.~Wu, Transposed Poisson algebras,
Novikov-Poisson algebras, and $3$-Lie algebras. \textit{J.
Algebra} \textbf{632} (2023), 35-66. // arXiv:2005.01110.

\bibitem{BM1}
C.~Bai, D.~Meng, The classification of Novikov algebras in low
dimensions. \textit{J. Phys. A: Math. Gen.}. \textbf{34} (2001),
1581-1594.


\bibitem{BM-DN}
C.~Bai, D.~Meng, On the realization of transitive Novikov
algebras. \textit{J. Phys. A: Math. Gen.}. \textbf{34} (2001),
3363-3372.

\bibitem{BN-Nov}
A.~Balinskii and S.~Novikov, Poisson brackets of hydrodynamic
type, Frobenius algebras and Lie algebras. \textit{Dokl. Akad.
Nauk SSSR} \textbf{283} (1985), 1036-1039.

\delete{
\bibitem{BFK-utpa}
P.~Beites, B.~Ferreira, and I.~Kaygorodov, Transposed Poisson structures, arXiv:2207.00281.

\bibitem{BOK-stpa}
P.~Beites, A.~Ouaridi, and I.~Kaygorodov, The algebraic and geometric classification of transposed Poisson algebras. \textit{Rev. Real Acad. Cienc. Exactas Fis. Nat. Ser. A-Mat}. \textbf{117} (2023), 2, 55.}

\delete{
\bibitem{B-prielie}
D.~Burde, Left-symmetric algebras, or pre-Lie algebras in geometry and physics.\textit{ Cent. Eur. J. Math}.
\textbf{4} (2006), 323-357.
}

\bibitem{Do-op}
M.~Bremner and V.~Dotsenko, Algebraic Operads: An Algorithmic
Companion. Chapman and Hall/CRC, Boca Raton, 2016.

\bibitem{Bu-prielie}
D.~Burde, Left-symmetric structures on simple modular Lie
algebras. \textit{J. Algebra} \textbf{169} (1994), 112-138.

\bibitem{Do-fm}
V.~Dotsenko,Algebraic structures of F-manifolds via pre-Lie algebras. \textit{Ann. Mat. Pura Appl.} \textbf{198} (2019), 11pp.

\bibitem{Do-ap}
V.~Dotsenko, Identities for deformation quantizations of almost Poisson algebras. \textit{Lett. Math. Phys.} \textbf{114} (2024), 263-277.

\bibitem{D-Novsp}
A.~Dzhumadil'daev, Special identity for Novikov-Jordan algebras.
\textit{Comm. Algebra} \textbf{33} (2005), 1279-1287.

\bibitem{D-Novdim}
A.~Dzhumadil'daev, Codimension growth and non-Koszulity of Novikov
operad. \textit{Comm. Algebra} \textbf{39} (2011), 2943-2952.

\bibitem{ES}
P.~Etingof and O.~Schiffmann, Lectures on Quantum Groups, 2nd ed. International Press, 2010.

\bibitem{FKL}
B.~Ferreira, I.~Kaygorodov and V.~Lopatkin,
$\frac{1}{2}$-derivations of Lie algebras and transposed Poisson
algebras. \textit{Rev. R. Acad. Cienc. Exactas Fis. Nat. Ser. A
Mat}. \textbf{115} (2021), 142.

\bibitem{GD}
I.~Gel'fand and I.~Dorfman, Hamiltonian operators and algebraic structures related to them. \textit{Funct. Anal. Appl}.
\textbf{13} (1979), 248-262.

\bibitem{K}
I.~Kaygorodov, Non-associative algebraic structures: classification and structure. \textit{Comm. Math}. \textbf{32} (2024), 1-62.



\bibitem{LSB-dpL}
J.~Liu, Y.~Sheng and C.~Bai, $F$-manifold algebras and deformation
quantization via pre-Lie algebras. \textit{J. Algebra}
\textbf{559} (2020), 467-495.

\bibitem{LV-op}
J.~Loday and B.~Vallette, Algebraic Operads. Springer, Heidelberg, 2012.

\bibitem{MR-pol}
M.~Markl and E.~Remm, Algebras with one operation including
Poisson and other Lie admissible algebras. \textit{J. Algebra}
\textbf{299} (2006), 171-189.

\delete{
\bibitem{PR-wl}
G.~Pogudin and Y.~Razmyslov, Prime Lie algebras satisfying the
standard Lie identity of degree 5. \textit{J. Algebra}
\textbf{468} (2016), 182-192.

\bibitem{P-wl}
L.~Poinsot, Wronskian envelope of a Lie algebra. \textit{Algebra}. \textbf{2013} (2013),
,341631, 8pp.
}
\bibitem{Sar-cgd}
B.~Sartayev, Some generalizations of the variety of transposed Poisson algebras. \textit{Comm. Math}. \textbf{32} (2024),
55-62.

\bibitem{Xu-NP}
 X.~Xu, Novikov-Poisson algebras. \textit{J. Algebra}. \textbf{190} (1997), 253-279.

\bibitem{ZBM-NP}
Y.~Zhao, C.~Bai and D.~Meng, Some Results on Novikov-Poisson
Algebras. \textit{Int. J. Theor. Phys}. \textbf{43} (2004),
519-528.

\end{thebibliography}
\end{document}